\newcommand{\color}[2][{}]{}         
\renewcommand\mathcal\mathscr
\numberwithin{equation}{section}
 \theoremstyle{plain}            
 \newtheorem{theorem}{Theorem}[section]
 \newtheorem{proposition}[theorem]{Proposition}
 \newtheorem{lemma}[theorem]{Lemma}
 \newtheorem{corollary}[theorem]{Corollary}
 \theoremstyle{definition}       
 \newtheorem{remark}[theorem]{Remark}
\newcommand{\Sec}[1]{Section~\ref{sec:#1}}
\newcommand{\Fig}[1]{Figure~\ref{fig:#1}}
\newcommand{\Footnote}[1]{Footnote~\ref{fn:#1}}
\newcommand{\Thm}[1]{Theorem~\ref{thm:#1}}
\newcommand{\Thmenum}[2]{Theorem~\ref{thm:#1}~(\ref{#2})}
\newcommand{\Thmenums}[3]{Theorem~\ref{thm:#1}~(\ref{#2}) and~(\ref{#3})}
\newcommand{\Lem}[1]{Lemma~\ref{lem:#1}}
\newcommand{\Cor}[1]{Corollary~\ref{cor:#1}}
\newcommand{\Prp}[1]{Proposition~\ref{prp:#1}}
\newcommand{\Prps}[2]{Propositions~\ref{prp:#1} and~\ref{prp:#2}}
\newcommand{\Rem}[1]{Remark~\ref{rem:#1}}
\newcommand{\card}[1]{\lvert#1\rvert}   
\newcommand{\dd}    {\, \mathrm d}    
\DeclareMathOperator{\dom}    {dom}
\DeclareMathOperator{\id}     {id}  
\DeclareMathOperator{\vol}    {vol}
\newcommand{\specsymb} {\sigma} 
\renewcommand{\Re}     {\mathrm {Re}\,}
\newcommand{\spec}[2][{}]   {\specsymb_{\mathrm{#1}}(#2)}
\newcommand{\disspec}[1]{\spec[disc]{#1}}
\newcommand{\Err}{\mathcal O}
\def\Xint#1{\mathchoice
   {\XXint\displaystyle\textstyle{#1}}%
   {\XXint\textstyle\scriptstyle{#1}}%
   {\XXint\scriptstyle\scriptscriptstyle{#1}}%
   {\XXint\scriptscriptstyle\scriptscriptstyle{#1}}%
   \!\int}
\def\XXint#1#2#3{{\setbox0=\hbox{$#1{#2#3}{\int}$}
     \vcenter{\hbox{$#2#3$}}\kern-.5\wd0}}
\def\XXsum#1#2#3{{\setbox0=\hbox{$#1{#2#3}{\int}$}
     \vcenter{\hbox{$#2#3$}}\kern-.60\wd0}}
\newcommand{\dashint}{\Xint-}   
\newcommand{\avint}{{\textstyle\dashint}}   
\newcommand{\R}{\mathbb{R}} 
\newcommand{\C}{\mathbb{C}} 
\newcommand{\N}{\mathbb{N}} 
\newcommand{\eps}{\varepsilon} 
\renewcommand{\phi}{\varphi}   
\newcommand{\e}{\mathrm e}  
\newcommand{\im}{\mathrm i} 
\newcommand{\wt}{\widetilde}           
\newcommand {\qf}[1]{\mathfrak{#1}}    
\newcommand{\mc}{\mathcal}
\newcommand{\HS}{\mathcal H}           
\newcommand{\Sobsymb} {\mathsf H}      
\newcommand{\Contsymb} {\mathsf C}     
\newcommand{\Lsymb}    {\mathsf L}     
\newcommand{\Sobspace}[1]{\Sobsymb^{#1}}      
\newcommand{\Lsqrspace}    {\Lsymb_2}     
\newcommand{\Lin}[1]{\mathcal L({#1})}
\newcommand{\Cont}[2][{}]{\Contsymb^{#1}({#2})}
\newcommand{\Lsqr}[2][{}]{\Lsymb_2^{#1}({#2})} 
\newcommand{\Sob}[2][1]{\Sobsymb^{#1}({#2})}         
\newcommand{\abs}[2][{}]{\lvert{#2}\rvert_{{#1}}}    
\newcommand{\abssqr}[2][{}]{\lvert{#2}\rvert^2_{#1}} 
\newcommand{\bigabs}[2][{}]{\bigl\lvert{#2}\bigr\rvert_{#1}}     
\newcommand{\bigabssqr}[2][{}]{\bigl\lvert{#2}\bigr\rvert^2_{#1}}
\newcommand{\Bigabs}[2][{}]{\Bigl\lvert{#2}\Bigr\rvert_{#1}}     
\newcommand{\norm}[2][{}]{\|{#2}\|_{{#1}}}    
\newcommand{\normsqr}[2][{}]{\|{#2}\|^2_{#1}} 
\newcommand{\bignorm}[2][{}]{\bigl\|{#2}\bigr\|_{#1}}     
\newcommand{\bignormsqr}[2][{}]{\bigl\|{#2}\bigr\|^2_{#1}}
\newcommand{\iprod}[3][{}]{\langle{#2},{#3}\rangle_{#1}}  
\newcommand{\set}[2]{\{ \, #1 \, | \, #2 \, \} }      
\newcommand{\map}[3]{ #1 \colon #2 \longrightarrow #3}    
\newcommand{\bd}  {\partial}                
\newcommand{\connsum}{\sqcup} 
\newcommand{\restr}[1]{{\restriction}_{#1}} 
\newcommand{\conj}[1]{\overline {{#1}}}       
\newcommand{\1}{\mathbf 1}                  
\renewcommand{\1}{\mathbbm 1}                    
\newcommand{\quadtext}[1]{\quad\text{#1}\quad}
\newcommand{\qquadtext}[1]{\qquad\text{#1}\qquad}
\newcommand{\de} {\mathord{\mathrm d}} 
\newcommand{\vxeps}{{\eps,v}}
\newcommand{\edeps}{{\eps,e}}
\newcommand{\maxsymb}[1]{\overline{#1}}
\newcommand{\cvol}{c_{\vol}}
\newcommand{\starsymb}[1]{#1^{\mathrm{star}}}
\newcommand{\approxsymb}[1]{#1^{\mathrm{approx}}}
\newcommand{\inl}{{\mathrm{int}}}
\begin{document}
\title[Approximation of vertex couplings by Schr\"odinger
operators]{A general approximation of quantum graph vertex couplings by scaled Schr\"odinger operators on thin branched manifolds}

\author{Pavel Exner}
\address{Department of Theoretical Physics, NPI, Academy of Sciences,
25068 \v{R}e\v{z} near Prague, and Doppler Institute, Czech
Technical University, B\v{r}ehov\'{a}~7, 11519 Prague, Czechia}
\email{exner@ujf.cas.cz}

\author{Olaf Post}      
\address{School of Mathematics, Cardiff University, Senghennydd Road,
  Cardiff, CF24 4AG, Wales, UK\newline
  \emph{On leave from:} Department of Mathematical Sciences, Durham
  University, England, UK}%
\email{olaf.post@durham.ac.uk}%
\date{\today \quad \emph{File:}
  \texttt{\jobname.tex}}


\begin{abstract}
  We demonstrate that any self-adjoint coupling in a quantum graph
  vertex can be approximated by a family of magnetic Schr\"odinger
  operators on a tubular network built over the graph. If such a
  manifold has a boundary, Neumann conditions are imposed at it. The
  procedure involves a local change of graph topology in the vicinity
  of the vertex; the approximation scheme constructed on the graph is
  subsequently `lifted' to the manifold. For the corresponding
  operator a norm-resolvent convergence is proved, with the natural
  identification map, as the tube diameters tend to zero.
\end{abstract}

\maketitle

%
\section{Introduction}
\label{sec:intro}
%

The concept of quantum graph \cite{ekkst:08} serves as a laboratory to
study quantum dynamics in situations when the configuration space has
a complicated topology. At the same time, it is a useful tool in
modelling numerous physical phenomena. To employ its full power, one
should be able to understand the meaning of parameters associated with
vertex coupling in such models, because one can typically associate
many self-adjoint Hamiltonians with the same graph. An old and natural
idea was to select plausible ones with the help of ``fat-graph''
approximations; it was formulated for the first time by
  Ruedenberg and Scherr \cite{ruedenberg-scherr:53} who proposed a
  heuristic Green-formula argument to demonstrate that such a
  shrinking limit would yield the simplest coupling conditions
conventionally labelled as Kirchhoff. After the interest to the
  problem had been renewed about twenty years ago a lot of effort was
  made to establish this limit rigorously, both for one-body
  Schr\"odinger equation~\cite{freidlin-wentzell:93, saito:00,
  rubinstein-schatzman:01, kuchment-zeng:01,exner-post:05, post:06}
and the Ginzburg-Landau dynamics \cite{rubinstein-schatzman:01}.
  The idea of fat graph approximations has also been used in spectral
  geometry by~\cite{colin:86b} in order to show that the first
  non-vanishing eigenvalue of a compact manifold of dimension three or
  higher can have arbitrarily high multiplicity.  Moreover, (rescaled)
  fat graphs and their limits can be used in calculating spectral
  invariants --- see~\cite{grieser:08b} for a survey
  and~\cite{mueller-mueller:06} for an example of a graph with one
  edge and two vertices.  We refer to
the monograph~\cite{post:12} for a detailed discussion of these problems and an
extensive bibliography.

However, already in the seventies investigations of branched
  electromagnetic wave\-guides \cite{mehran:78} indicated that the low
  frequency behavior, closely linked to the shrinking limit, can be
  different for different geometries, and after
  geometrically induced states bound states in Dirichlet tubes were discovered it became clear that the answer depends substantially on what boundary conditions one chooses for operators on the tube-like manifolds constructed over the graph ``skeleton'', and that the
  limiting coupling may \emph{not} be of the Kirchhoff type \cite{exner-seba:89}. More recent investigations investigations have
  shown that this is typically the case for Dirichlet tube networks, cf.~\cite{post:05,
  molchanov-vainberg:07, grieser:08, acf:07, cacciapuoti-exner:07,
  dell-antonio-costa:10}. Here an energy renormalisation is needed and
when one chooses the natural one which consists of subtracting the
lowest transverse eigenvalue which blows up when the tube diameter
$\eps$ tends to zero, a nontrivial limit is achieved provided the fat
graph from which one starts has a threshold resonance.

It is natural to ask whether and how can a shrinking limit produce
\emph{all} the admissible vertex couplings. Far from being a mere
  mathematical conundrum the problem is of practical importance. When
  building network-type objects one is primarily interested in control
  of the transport; the possibility to manipulate the junction
  dynamics is one of the most direct ways to achieve this goal.
  Methods to alternate transport properties of beam splitters by
  changing the junction geometry \cite{chien-chen-luan:06,
    zhan-li-wang:11} or material characteristics
  \cite{takeda-yoshino:03} have been devised for photonic crystals,
  and it is only a matter of time when the fabrication technique
  progress will allow to address analogous technological challenges
  for networks of metallic nanowires or carbon nanotubes, modeled by
  by manifolds with Neumann boundary conditions or without a boundary, respectively, or semiconductor ones for which
  Dirichlet conditions are used.

Furthermore, some applications of junction control can be made
  more specific. For instance, Cheon et
  al.~\cite{cheon-tsutsui-fulop:04} proposed the generalized point
  interaction, i.e.\ the simplest nontrivial graph with two edges and a
  general vertex coupling, as a model gate for quantum computing; in a
  similar way other star graphs can model an arbitrary \emph{qudit}. The
  point is that the geometry of the eigenvalue manifolds of the
  corresponding operators is described by the group $U(n)$ where $n$
  is the number of the connected edges; for a concrete way how the
  corresponding eigenvalue anholonomy can be used in Grover search
  algorithm see, e.g.~\cite{tanaka-miyamoto:07, tanaka-nemoto:10}. To
  implement such proposals with real-world objects, a proper
  understanding of the junction dynamics is again essential.

The aim of the present paper is to provide a complete solution to
  the problem for tubular network manifolds the boundary of which is
  either Neumann or absent. The approximation we are going to
  construct has several ingredients. The first is the use of scaled
  potentials. If the network dynamics is described by the
Laplace-Beltrami operator the limit leads to the Kirchhoff coupling,
hence one has to replace it by a suitable family of Schr\"odinger
operators. One proceeds at that in two steps, first an approximation
is constructed on the graph itself and subsequently it is ``lifted''
to the tubular manifold. In this way we have been able
in~\cite{exner-post:09} to approximate two important coupling types
usually referred to as $\delta$ and $\delta'_{\mathrm s}$.  Referring
to the graph approximation result obtained in \cite{exner-turek:07} we
conjectured existence of such approximation to any vertex coupling
with real coefficients which covers all the couplings invariant with
respect to the time reversal.  We are going to show here that one is
not only able to prove the said conjecture but in fact can do better:
following the ``algebraic'' work done in~\cite{cet:10} we demonstrate
here existence of a ``fat-graph'' approximation for \emph{all}
self-adjoint vertex couplings.  The new idea here is to lift the
indicated, quite subtle approximation from the metric graph level to fat graphs keeping a precise control of the estimates.

Let us recall briefly how the approximation constructed
in~\cite{cet:10} works, a detailed description will be given in
\Sec{graphapprox} below. It has several steps:

\begin{enumerate}

\item we change locally the graph topology disconnecting the edges and
  connecting the loose ends by addition finite edges the length of
  which tends to zero. Some of them may be missing, depending on the
  coupling we want to approximate

\item the additional edges will be coupled to the original ones by
  $\delta$ conditions of the strength dependent on the approximation
  parameter. We also add a parameter-dependent $\delta$ interaction to
  the centre of these finite edges

\item in order to accommodate the couplings with non-real coefficients
  we add magnetic fields described by (the tangent components of)
  appropriate vector potentials, also dependent on the approximation
  parameter

\end{enumerate}

The main result of this paper consists of ``lifting'' this
approximation to tubular networks and demonstrating that one can
approximate in this way any self-adjoint vertex coupling. Since the
approximation bears a local character we concentrate our attention on
star graphs having a single vertex; an extension to general graphs
satisfying suitable uniformity conditions can be performed in the same
way as in~\cite{post:12}.

Let us remark in addition that in contrast to the approximation
  of Kirchhoff coupling by Neumann Laplacians on a tubular network the
  limit constructed in this paper is \emph{non-generic},
  cf.~\cite{exner-neidhardt-zagrebnov:01}, and at the same time
  \emph{non-unique}. An example of different limits for the same
  coupling will be mentioned in \Sec{ex.vx}, and one can conjecture
  also existence of significantly different approximation schemes, in
  particular, purely geometric ones~\cite{kuchment-post:pre12}. The
  result of the present paper thus allows us to make several
  conclusions. From the physics point of view it answers affirmatively
  the question whether one can approximate all the vertex couplings
  allowed by the sole requirement of probability current conservation,
  and at the same time, it suggests one possible construction to
  achieve this goal technically. On the mathematics side,
  approximations of different vertex couplings open interesting
  possibilities in connection with the mentioned use of fat graphs in
  calculating spectral invariants.

The paper is organised as follows: In the next section, we outline the
approximation procedure on the graph level.  In \Sec{app.mfd}, we
construct the graph-like manifold model.  Moreover, we introduce the
quadratic forms corresponding to our operators on the graph and the
manifolds and relate them with the ``free'' operators, i.e the
corresponding Laplacians.  In \Sec{conv.op} we briefly recall the
convergence of operators and forms acting in different Hilbert spaces,
apply the abstract conclusions to our situation here, and demonstrate
our main result expressed in \Thm{main}.  In \Sec{ex}, we present some
examples, including the case of a metric graph embedded in $\R^\nu$
when the manifold model is an $\eps$-neighbourhood of the graph.

%
\section{Approximation on the graph level}
\label{sec:graphapprox}
%

As we have indicated in the introduction the approximation is
constructed in two steps. First we solve the problem on the graph
level, and the obtained approximation is then ``lifted'' to
network-type manifolds. The first part of this programme was realised
in~\cite{cet:10} and we summarise here the results as a necessary
preliminary.

Any self-adjoint coupling in a vertex of degree $n$ can be expressed
through vertex conditions --- one usually speaks about \emph{admissible} conditions --- which involve the boundary values $f(0),
f'(0)\in\C^n$. They are conventionally written in the form
\begin{equation}
  \label{vertex-bc}
  Af(0)+Bf'(0)=0\,,
\end{equation}
where $A,\,B$ are $n\times n$ matrices such that the $n\times 2n$
matrix $(A|B)$ has maximum rank and $AB^*$ is Hermitian,
cf.~\cite{kostrykin-schrader:99}.  A pair $(A,B)$ describing a given
coupling is naturally not unique and there are various ways how to
remove the non-uniqueness, see e.g.~\cite{harmer:00, kuchment:04}.
The most suitable for our purpose is the one given by the following
claim proved in \cite{cet:10}; it is simple but it requires an
appropriate graph edge numbering.

\begin{proposition}
  \label{prp:stform}
  For a quantum graph vertex of degree $n$, the following is valid: \\ [.2em]
  (a) If $S\in\C^{m \times m}$ with $m\leq n$ is a Hermitian matrix and
  $T\in\C^{m \times (n-m)}$, then the equation
  \begin{equation}
    \label{STform}
    \left(\begin{array}{cc}
        I^{(m)} & T \\
        0 & 0
      \end{array}\right)f'(0)=
    \left(\begin{array}{cc}
        S & 0 \\
        -T^* & I^{(n-m)}
      \end{array}\right)f(0)
  \end{equation}
  expresses admissible vertex conditions which make the graph
  Laplacian a self-adjoint operator.\\ [.2em]
  (b) Conversely, for any self-adjoint vertex coupling there is a
  number $m\leq n$ and a numbering of edges such that the coupling is
  described by the conditions~\eqref{STform} with uniquely given
  matrices $T\in\C^{m \times (n-m)}$ and $S=S^*\in\C^{m \times m}$. If
  the edge numbering is given one can bring the coupling into the form
  \eqref{STform} by a permutation $(1,\dots,n) \mapsto
  (\Pi(1),\dots,\Pi(n))$ of the edge indices with the matrices $S,T$
  uniquely determined by the permutation $\Pi$.
\end{proposition}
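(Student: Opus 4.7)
My plan is to handle parts (a) and (b) separately, with the bulk of the work in part (b). For part (a), I would simply read off
$A = \begin{pmatrix} S & 0 \\ -T^* & I^{(n-m)} \end{pmatrix}$
and
$B = \begin{pmatrix} I^{(m)} & T \\ 0 & 0 \end{pmatrix}$
from \eqref{STform} and verify the two admissibility criteria directly. The identity blocks in $B$ (upper-left) and $A$ (lower-right) make the maximal rank of $(A|B)$ immediate, and a short block multiplication gives
\[
AB^* \;=\; \begin{pmatrix} S & 0 \\ -T^* & I^{(n-m)} \end{pmatrix}\begin{pmatrix} I^{(m)} & 0 \\ T^* & 0 \end{pmatrix} \;=\; \begin{pmatrix} S & 0 \\ 0 & 0 \end{pmatrix},
\]
which is Hermitian precisely because $S=S^*$.

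For part (b), I would proceed constructively in three steps, starting from an arbitrary admissible pair $(A|B)$. First, set $m := \operatorname{rank} B$ and choose a permutation $\Pi$ of the edges bringing some $m$ linearly independent columns of $B$ into the first $m$ positions. Second, apply Gauss--Jordan elimination from the left (multiplication by an invertible $n\times n$ matrix $M$) to bring $B$ into the normal form $B' := MB = \begin{pmatrix} I^{(m)} & T \\ 0 & 0 \end{pmatrix}$ for some $T\in\C^{m\times (n-m)}$, with $A' := MA$ replacing $A$; admissibility is preserved throughout. Third, exploit the residual freedom: direct inspection shows that the invertible $N$ with $NB' = B'$ have block form $N = \begin{pmatrix} I^{(m)} & N_{12} \\ 0 & N_{22} \end{pmatrix}$ with $N_{22}$ arbitrary invertible, and I would use this freedom to normalize $A'$.

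The main obstacle is showing that this residual freedom suffices to achieve the target normalization $A'_{22} = I^{(n-m)}$ and $A'_{12} = 0$, which amounts to proving that the lower-right block $A'_{22}$ of $A'$ is invertible. I would obtain this by combining two facts: the Hermiticity of $A'B'^*$ yields $A'_{21} = -A'_{22}T^*$, while the maximal rank of $(A'|B')$ forces the row block $(A'_{21}|A'_{22})$ to have rank $n-m$; the resulting factorization $(A'_{21}|A'_{22}) = A'_{22}\,(-T^*\,|\,I^{(n-m)})$ then makes $A'_{22}$ invertible. Setting $N_{22} := (A'_{22})^{-1}$ and $N_{12} := -A'_{12}(A'_{22})^{-1}$ completes the normalization, and a short computation invoking the Hermiticity of $A'B'^*$ once more shows that the remaining blocks take the prescribed form, with lower-left block $-T^*$ and upper-left block $S := A'_{11} + A'_{12}T^*$, which is Hermitian by the original Kostrykin--Schrader condition. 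Uniqueness of $(S,T)$ for the chosen $\Pi$ follows because the identity blocks and the zero block in the normalized $A'$ and $B'$ leave only $N = I$ in the residual freedom, so $S$ and $T$ are read off unambiguously.
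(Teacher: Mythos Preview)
Your argument is correct and self-contained. The paper itself does not give a proof of this proposition; it states the result and refers to~\cite{cet:10} for the proof (``the following claim proved in~\cite{cet:10}''). Your approach --- verifying the Kostrykin--Schrader criteria directly for part~(a), and for part~(b) reducing $B$ to row echelon form after a column permutation, then using the Hermiticity of $A'B'^{*}$ together with the rank condition to force invertibility of $A'_{22}$ and normalise $A'$ --- is precisely the natural linear-algebra route and matches in spirit what one finds in the cited reference. One minor point of bookkeeping: the paper writes the general condition as $Af(0)+Bf'(0)=0$ while~\eqref{STform} reads $Bf'(0)=\tilde A f(0)$, so strictly $A=-\tilde A$; this changes $AB^{*}$ by a sign only and does not affect the Hermiticity check, but it is worth stating explicitly. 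Your uniqueness argument is also fine: $T$ is fixed as the reduced row echelon data of $B$ once $\Pi$ is chosen, and then the residual stabiliser computation forces $N=I$, hence $S$ is determined as well.
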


Now we can describe the approximation of such a general vertex coupling.
For simplicity we consider a star graph of $n$ semi-infinite edges; in
view of the proposition we may suppose that the wave functions are
coupled according to~\eqref{STform} renaming the edges if necessary.
The construction has two main ingredients. First of all, we have to
change locally the graph topology, adding vertices to the graph as
well as new edges which would shrink to zero in the limit. In this way
one is able to get~\eqref{vertex-bc} with \emph{real} matrices $A,B$; to
overcome this restriction we need to introduce also local magnetic
fields, i.e.\ to place suitable vector potentials at the added edges.

The construction is sketched in \Fig{gen-approx}; we disconnect the
edges of the star graph and connect their loose endpoints by line
segments supporting appropriate operators according to the following
rules:
\begin{figure}[ht]
  \begin{center}
    \begin{picture}(0,0)%
      \includegraphics[scale=0.5]{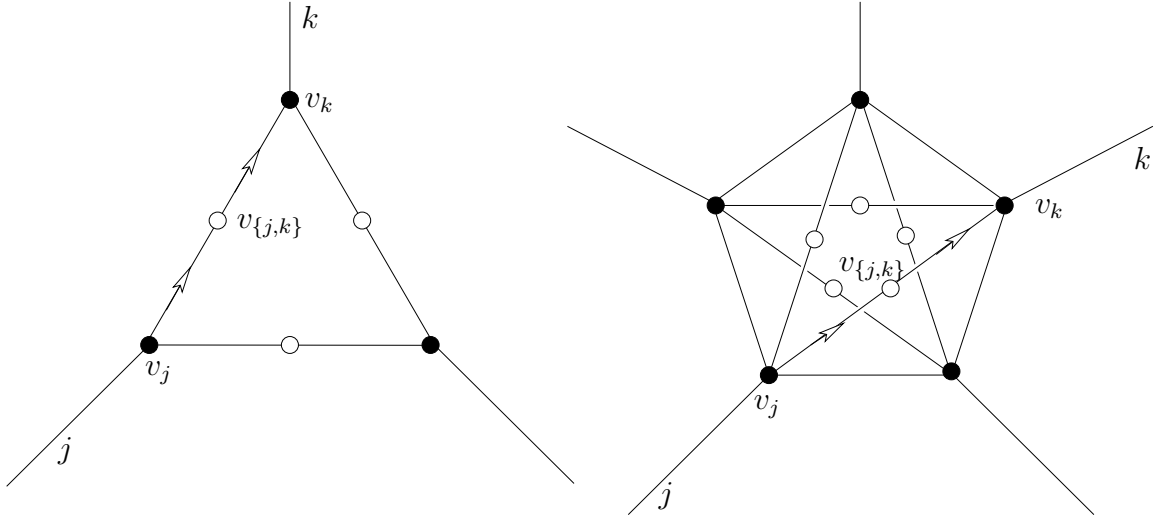}%
    \end{picture}%
    \setlength{\unitlength}{2072sp}
\begin{picture}(13592,6144)(-664,-5473)
  \put(9200,-2550){$v_{\{j,k\}}$}%
  \put(11521,-1861){$v_k$}%
  \put(8191,-4201){$v_j$}%
  \put(7066,-5326){$j$}%
  \put(12691,-1341){$k$}%
  \put(-44,-4786){$j$}%
  \put(2836,344){$k$}%
  \put(2881,-601){$v_k$}%
  \put(2071,-2041){$v_{\{j,k\}}$}%
  \put(991,-3796){$v_j$}%
\end{picture}%
  \end{center}
  \caption{The approximation scheme for a vertex of degree $n=3$ and
    $n=5$.  The inner edges are of length $2d$, some may be missing
    depending on the choice of the matrices $S$ and $T$. The arrows
    symbolise the vector potential.}
  \label{fig:gen-approx}
\end{figure}

\begin{enumerate}
  \setlength{\itemsep}{0pt}
\item
  \label{conv.i}
  As a convention, the rows of the matrix $T$ are indexed from 1 to
  $m$, while the columns are indexed from $m+1$ to $n$. For the sake
  of brevity, we use in this section the symbol $\hat n:=
  \{1,\dots,n\}$.

\item The external semi-infinite edges of the approximating graph, each
  parametrised by $s \in\R_+$ are at their endpoints $v_j$ connected
  to the inner edges by $\delta$ coupling with the parameter $w_j(d)$
  for each $j\in\hat{n}$ (see below).

\item
  \label{conv.iii}
  Certain pairs $v_j,v_k$ of external edge endpoints will be connected
  by segments (or \emph{inner edges}, labelled by $\{j,k\}$) of length
  $2d$. This will be the case if one of the following conditions is
  satisfied, taking into account the convention~\eqref{conv.i}:

  \begin{enumerate}
    \setlength{\itemsep}{0pt}
  \item $j\in\hat{m}$, $k\geq m+1$, and $T_{jk}\neq0$
    (or $j\geq m+1$, $k\in\hat{m}$, and $T_{kj}\neq0$),
  \item $j,k\in\hat{m}$ and
    $(\exists l\geq m+1)(T_{jl}\neq0\wedge T_{kl}\neq0)$,
  \item $j,k\in\hat{m}$, $S_{jk}\neq0$, and the previous condition is
    not satisfied.
\end{enumerate}

\item
  \label{conv.iv}
  We denote the centre of such a connecting segment by
  $v_{\{j,k\}}$ and place there $\delta$ interaction with a parameter
  $w_{\{j,k\}}(d)$. We adopt another convention: the connecting edges
  will be regarded as union of two line segments of the length $d$,
  with the variable running from zero at $w_{\{j,k\}}$ to $d$ at $v_j$
  or $v_k$.

\item Finally, we put a vector potential on each connecting segment.
  What matters is its component tangential to the edge; we suppose it
  is constant along the edge and denote its value between the points
  $v_{\{j,k\}}$ and $v_j$ as $A_{(j,k)}(d)$, and between the points
  $v_{\{j,k\}}$ and $v_k$ as $A_{(k,j)}(d)$; recall that the two
  half-segments have opposite orientation, thus
  $A_{(k,j)}(d)=-A_{(j,k)}(d)$ holds for any pair $\{j,k\}$.
\end{enumerate}
The choice of the dependence of $w_j(d)$, $w_{\{j,k\}}(d)$, and
$A_{(j,k)}(d)$ on the length parameter $d$ is naturally crucial; we
will specify it below. We denote by $N_j\subset\hat{n}$ the set
containing indices of all the external edges connected to the $j$-th
one by an inner edge, i.e.\
\begin{eqnarray*} 
  N_j &\!\!:=\!\!& \{k\in\hat{m}:\, S_{jk}\neq0\}\cup\{k\in\hat{m}:\,
  (\exists l\geq m+1)(T_{jl}\neq0\wedge T_{kl}\neq0)\} \\
  & &\cup\,\{k\geq m+1:\, T_{jk}\neq0\} \hspace{5em} \mbox{for } j\in\hat{m}\\
  N_j &\!\!:=\!\!& \{k\in\hat{m}:\, T_{kj}\neq0\} \hspace{7.7em} \mbox{for } j\geq m+1 \nonumber
\end{eqnarray*}
The definition of the set $N_j$ has two simple consequences, namely
\begin{equation*}
  k\in N_j\Leftrightarrow j\in N_k \qquad \mathrm{and} \qquad
  j\geq m+1\Rightarrow N_j\subset\hat{m}\,.
\end{equation*}
We employ the following symbols for wave function components on the
edges: those on the $j$-th external one is denoted by $f_j$, while the
wave function on the connecting segments is denoted $f_{(j,k)}$ on the
interval between $v_{\{j,k\}}$ and $v_j$ and $f_{(k,j)}$ on the other
half of the segment; the conventions about parametrisation of the
intervals have been specified above.

Next we shall write explicitly the coupling conditions involved in the
above described scheme, first without the vector potentials; for
simplicity we will often refrain from indicating the dependence of the
parameters $w_j(d)$, $w_{\{j,k\}}(d)$ on the distance $d$. The
$\delta$ interaction at the segment connecting the $j$-th and $k$-th
outer edge (present for $j,k \in \hat{n}$ such that $k \in N_j$) is
expressed through the conditions
\begin{equation*}
  f_{(j,k)}(0) = f_{(k,j)}(0) =:f_{\{j,k\}}(0)\,,
  \quad
  f_{(j,k)}'(0+)+f_{(k,j)}'(0+) = w_{\{j,k\}}f_{\{j,k\}}(0)\,,
\end{equation*}
while the $\delta$ coupling at the endpoint of the $j$-th external
edge, $j\in\hat{n}$, means
\begin{equation*}
  f_j(0)=f_{(j,k)}(d)
  \quadtext{for all  $k\in N_j$,}
  f_j'(0)-\sum_{k \in N_j}f_{(j,k)}'(d-) = w_j f_j(0)\,.
\end{equation*}
It is not difficult to modify these conditions to include the vector
potentials using a simple gauge
transformation~\cite{cet:10}: the continuity requirement
is preserved, while the coupling parameter changes from $w_j(d)$ to
$w_j(d)+\im \sum_{k\in N_j} A_{(j,k)}(d)$; in other words, the impact
of the added potentials results into the phase shifts $d A_{(j,k)}(d)$
and $d A_{(k,j)}(d)$, respectively, on the appropriate parts of the
connecting segments.

Using the above conditions one can find suitable candidates for
$w_j(d)$, $w_{\{j,k\}}(d)$, and $A_{(j,k)}(d)$ by inserting the
boundary values written as
\begin{gather*}
  f_{(j,k)}(d)
  = \e^{\im  d A_{(j,k)}} \bigr( f_{(j,k)}(0) + d f_{(j,k)}'(0) \bigl)
     + \Err(d^2) \quad\text{and}\\
  f_{(j,k)}'(d) = \e^{\im d A_{(j,k)}} f_{(j,k)}'(0) + \Err(d)
\end{gather*}
for any $j,k \in \hat n$ and fixing the $d$-dependence in such a way
that the limit $d\to 0$ yields~\eqref{STform}. The procedure is
demanding and described in detail in~\cite{cet:10}, we
will mention just its results. As for $A_{(j,k)}(d)$, we have the
relations
\begin{subequations}
  \label{eq:def.w.a}
\begin{equation}
  \label{a.m.n-m}
  A_{(j,k)}(d)=
  \begin{cases}
    \frac 1 {2d}\arg\,T_{jk} & \text{if $\Re T_{jk}\geq 0$,}\\
    \frac 1 {2d}\bigl(\arg\,T_{jk}-\pi\bigr) & \text{if $\Re
      T_{jk}<0$}
  \end{cases}
\end{equation}
for all $j\in\hat{m}$, $k\in N_j\setminus \hat m$, while for
$j\in\hat{m}$ and $k\in N_j\cap\hat{m}$ we put
\begin{equation}
  \label{a.m.m-1}
  A_{(j,k)}(d)=
  \begin{cases}
    \frac{1}{2d}\arg \Bigl(d S_{jk}
    +\sum_{l=m+1}^n T_{jl} \conj {T_{kl}} \Bigr) \\
    \frac{1}{2d}\Big[\arg \Bigl(d S_{jk} +\sum_{l=m+1}^n T_{jl}\conj
    {T_{kl}} \Bigr)-\pi\Big]
  \end{cases}
\end{equation}
depending similarly on whether $\Re \bigl(d S_{jk} +\sum_{l=m+1}^n
T_{jl}\conj {T_{kl}} \bigr)$ is nonnegative or not. Concerning
$w_{\{j,k\}}(d)$, we require that
\begin{equation}
  \label{w.m.n-m}
  w_{\{j,k\}}(d)
  =\frac{1}{d}\left(-2+\frac{1}{\langle T_{jk}\rangle}\right)
  \qquad \forall \  j\in\hat{m},\ k\in N_j\setminus\hat{m}\,.
\end{equation}
and
\begin{equation}
  \label{w.m.m}
  \frac{1}{2+d\cdot w_{\{j,k\}}}=-\left\langle d\cdot S_{jk}
    +\sum_{l=m+1}^n T_{jl}\overline{T_{kl}}\right\rangle
  \qquad \forall \ j\in\hat{m},\  k\in
  N_j\cap\hat{m}\,,
\end{equation}
where the bracket symbol on the right hand side is defined as $\langle c\rangle := |c|$ for
$\Re c\ge 0$ and $\langle c\rangle := -|c|$ for $\Re c<0$. Finally, the expressions for
$w_k$ are given by
\begin{equation}
  \label{v.n-m}
  w_k(d)= \frac{1-\card {N_k}+\sum_{h=1}^m\langle T_{hk}\rangle}{d}
  \qquad \forall \  k\geq m+1\,,
\end{equation}
and
\begin{equation}
  \label{v.m}
  w_j(d)
  = S_{jj} -\frac{\card {N_j}}{d}
  - \sum_{\substack{k=1\\k \ne j}}^m
    \Big \langle
      S_{jk}+\frac{1}{d}
      \sum_{l=m+1}^n
      T_{jl}\conj{T_{kl}}
    \Big\rangle
    + \frac{1}{d}
  \sum_{l=m+1}^n(1+\langle T_{jl}\rangle)\langle T_{jl}\rangle
\end{equation}
if $j\in\hat{m}$ and $k\in N_j\cap\hat{m}$.
\end{subequations}

\begin{remark}
  \label{rem:order.a.w}
  For our later considerations it is crucial to know precisely the
  dependence of the magnetic and electric potentials
  $A_{(j,k)}=A_{(j,k)}(d)$ and $w_{\{j,k\}}=w_{\{j,k\}}(d)$ on the
  internal length $d$.  We have $A_{(j,k)}(d)=\Err(d^{-1})$ and
  $w_{\{j,k\}}(d)=\Err(d^{-1})$ if $k \in N_j \setminus \hat m$.  If
  $k \in N_j \cap \hat m$, then we have to distinguish two cases. If
  \begin{equation}
    \label{eq:t.term}
    \sum_{l=m+1}^n T_{jl} \conj {T_{kl}} \ne 0,
  \end{equation}
  then we again have $w_{\{j,k\}}(d)=\Err(d^{-1})$.  Otherwise, we
  collect another power of $d^{-1}$ and obtain
  $w_{\{j,k\}}(d)=\Err(d^{-2})$.  We are not aware of any meaning
  of~\eqref{eq:t.term} in terms of the original vertex coupling or
  equivalent characterisations.
\end{remark}

The choice of the parameters has been guided by formal considerations
but it opens way to prove the convergence of the corresponding
operators. Let us denote the Laplacian on the star graph $\Gamma(0)$
with the coupling~\eqref{STform} in the vertex as $\starsymb H$, while
$\approxsymb H_d$ will stand for the operators of the described
approximating family; the symbols $\approxsymb R(z)$ and $\approxsymb
R_d(z)$ will denote respectively the resolvents of those operators at
the energy $z$ outside the spectrum. We have to keep in mind that they
act on different spaces: $\starsymb R(z)$ maps $\Lsqr {\Gamma(0)}$
onto $\dom \starsymb H$, while the domain of $\approxsymb R_d(z)$ is
$\Lsqr {\Gamma^{S,T}(d)}$, where $\Gamma^{S,T}(d)=\Gamma(0) \connsum
\Gamma_\inl^{S,T}(d)$ and where $\Gamma_\inl^{S,T}(d)$ is the graph of
connecting (inner) edges of length $2d$ described above. In order to
compare the resolvents, we identify thus $\starsymb R(z)$ with the
orthogonal sum
\begin{equation}
  \label{Gdecomp}
  \starsymb R_d(z):=\starsymb R(z)\oplus0
\end{equation}
adding the zero operator acting on $\Lsqr{\Gamma_\inl^{S,T}(d)}$.
Then both operators act on the same space and one can estimate their
difference; using explicit forms of the corresponding resolvent
kernels one can check in a straightforward but rather tedious way the
relation
\begin{equation*}
  \norm[\mc B_2] {\starsymb R_d(z)-\approxsymb R_d(z)}
  = \Err(\sqrt{d}) \quadtext{as} d\to 0+
\end{equation*}
for the Hilbert-Schmidt norm, see~\cite{cet:10}. With
the identification~\eqref{Gdecomp} in mind we can then state the
indicated approximation result.

\begin{theorem}
  \label{thm:ce-ex-tu}
  Let $w_j(d)$, $j \in \hat{n}$, $w_{\{j,k\}}(d)$, $j \in \hat{n}$,
  $k\in\N_j$ and $A^{(j,k)}(d)$ depend on the length $d$ according to
  \eqref{a.m.n-m}--\eqref{v.m}.  Then the family
  $H^\mathrm{approx}_d$ converges to $H^\mathrm{star}$ in the
  norm-resolvent sense as $d\to 0+$.
\end{theorem}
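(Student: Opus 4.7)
The plan is to prove norm-resolvent convergence by comparing the integral kernels of $\starsymb R_d(z)$ and $\approxsymb R_d(z)$ at a single fixed $z$ in the common resolvent set (for instance, a point on the negative real axis with $|z|$ sufficiently large) and then controlling the Hilbert--Schmidt norm of their difference. Once we use the identification~\eqref{Gdecomp} both operators act on the same space $\Lsqr{\Gamma^{S,T}(d)}$, and since $\Gamma^{S,T}(d)$ consists of a finite number of semi-infinite edges together with the short inner graph $\Gamma_\inl^{S,T}(d)$, each Green's function admits an explicit representation in terms of the exponential free kernels on the half-lines matched through the vertex coupling conditions to an inner piece. Establishing the Hilbert--Schmidt estimate $\norm[\mc B_2]{\starsymb R_d(z)-\approxsymb R_d(z)} = \Err(\sqrt d)$ stated in the discussion preceding the theorem then implies norm-resolvent convergence.

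First I would remove the magnetic field on each half-segment of an inner edge by a gauge transformation $f\mapsto \e^{-\im A_{(j,k)} s}f$, which turns the magnetic Laplacian into the ordinary one at the price of phase factors $\e^{\im d A_{(j,k)}}$ at the endpoints $v_j$; these are precisely the phases appearing in the boundary-value expansions
\begin{equation*}
  f_{(j,k)}(d) = \e^{\im d A_{(j,k)}}\bigl(f_{(j,k)}(0) + d f_{(j,k)}'(0)\bigr) + \Err(d^2)
\end{equation*}
and in the companion expansion of $f_{(j,k)}'(d)$. Substituting these expansions into the $\delta$-coupling conditions at the inner centres $v_{\{j,k\}}$ and at the outer endpoints $v_j$, and then eliminating the internal boundary data, produces an effective coupling at the original vertex between the traces $f_j(0)$ and $f_j'(0)$ on the semi-infinite edges. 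The algebraic calculation carried out in~\cite{cet:10} shows that with the choices \eqref{a.m.n-m}--\eqref{v.m} of the parameters this effective coupling converges, pointwise in the boundary data, to~\eqref{STform}. Consequently one may write the kernel of $\approxsymb R_d(z)$ as the kernel of $\starsymb R_d(z)$ plus a remainder supported on the inner graph and on the boundary layers of the half-lines, which is then to be bounded.

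The hard part is that the parameters $w_j(d)$, $w_{\{j,k\}}(d)$, and $A_{(j,k)}(d)$ all blow up at least like $d^{-1}$ --- in some cases like $d^{-2}$, see \Rem{order.a.w} --- so crude estimates do not even yield strong-resolvent convergence. The whole point of the algebraic construction behind \eqref{a.m.n-m}--\eqref{v.m} is that the divergent contributions cancel in the relevant combinations of boundary values; what remains in the limit is precisely the finite coupling~\eqref{STform}. Concretely, one needs a uniform sup-norm bound on the inner Green's function together with an estimate of its integral against the half-line Green's function, from which the factor $\sqrt d$ emerges via $\vol(\Gamma_\inl^{S,T}(d))^{1/2}\sim \sqrt d$. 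Combining these ingredients delivers the claimed Hilbert--Schmidt bound and hence the norm-resolvent convergence asserted in the theorem.
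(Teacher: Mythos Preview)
Your proposal is correct and follows essentially the same approach the paper takes: the paper does not give an independent proof but simply records that an explicit comparison of the resolvent kernels yields the Hilbert--Schmidt estimate $\norm[\mc B_2]{\starsymb R_d(z)-\approxsymb R_d(z)}=\Err(\sqrt d)$ and refers to~\cite{cet:10} for the lengthy computation, exactly as you do. Your sketch of the mechanism --- gauge-removing the magnetic potential, expanding the boundary data to leading order in $d$, and invoking the cancellations built into~\eqref{eq:def.w.a} --- matches the summary given in the paper and in the cited reference.
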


We present some examples of vertex coupling approximations in \Sec{ex.vx}.

%
\section{Approximation by Schr\"odinger operators on manifolds}
\label{sec:app.mfd}
%

Now we pass to the second step and show how the intermediate quantum
graph constructed in~\Sec{graphapprox} with $\delta$ couplings and
vector potentials can be approximated by scaled magnetic Schr\"odinger
operators on manifolds. For the sake of simplicity, we consider first
an approximation using abstract manifolds without boundary, and
discuss the case of a graph embedded in $\R^\nu$ subsequently
in~\Sec{emb.gr}. To set up the approximation scheme, it is convenient
to work with appropriate quadratic forms instead of the associated
operators.

\subsection{The spaces and quadratic form on the graph level}
\label{sec:spaces}

We start with the definition of the Hilbert space and quadratic form
on the intermediate graph $\Gamma=\Gamma^{S,T}(d)$, where $d \in
(0,1]$ denotes the approximation parameter of the previous section. It
is convenient to modify slightly the convention~\eqref{conv.iv}
concerning the internal edges $e=\{j,k\}$; from now on we shall
consider each of them as a single edge with the $\delta$ interaction
in the middle (i.e.\ at $v_{\{j,k\}}$) and identify this edge with the
interval $[-d,d]$, oriented in such a way that the parameter increases
from $j$ to $k$ if $j<k$.  Concerning the vector potential, we set
$A_e := A_{(j,i)} = - A_{(i,j)}$. For the sake of brevity, we use the
symbols $A=(A_e)_e$, $w=(w_e,w_v)_{e,v}$ for the collections of
magnetic potentials and $\delta$ interaction strengths, respectively.
We will also often suppress in the sequel the dependence of the
quantities on $d$, $A$, and $w$. With each outer edge $e \in \hat
n=\{1,\dots, n\}$, we associate $I_e := [0,\infty)$, and for each
inner edge $e \in
\begin{pmatrix} \hat n\\ 2 \end{pmatrix}=\set{\{j,k\}} {1 \le j < k
  \le n}$, we set $I_e=I_e(d)=[-d,d]$.  As the Hilbert and Sobolev
spaces on a fixed edge needed in our approximation we set
\begin{equation*}
  \HS_e := \Lsqr {I_e} \qquadtext{and}
  \HS_e^1 := \Sob{I_e},
\end{equation*}
where $\Lsqr I$ and $\Sob I$ denote as usual the space of square
integrable functions and of once weakly differentiable and square
integrable functions on the interval $I$, respectively.

For all the quadratic forms defined below, the domains consist of
elements of $\HS_e^1$.  With the described parametrisation of an inner
edge $e=\{j,k\}$ with $i<k$ the corresponding quadratic form is
\begin{equation*}
  \check{\qf h}_e(f_e)
  := \int_{-d}^d \bigabssqr{f_e'(s) + \im A_e f_e(s)}\dd s
        +  w_e \bigabssqr{f_e(0)}.
\end{equation*}
This form corresponds to the Laplacian on the edge with the magnetic
potential $A_e$ and the $\delta$ interaction at the point $s=0$.  It
is convenient to introduce also a quadratic form which includes the
effect of the $\delta$ interactions at the edge endpoints, namely
\begin{equation*}
  \qf h_e(f_e)
  := \check{\qf h}_e(f_e) + \frac{w_j}{\card {N_j}} \cdot \bigabssqr{f_e(-d)}
                  + \frac{w_k}{\card {N_k}} \cdot \bigabssqr{f_e(d)} .
\end{equation*}
On an outer edge, we simply set
\begin{equation*}
  \qf h_e(f_e) := \int_0^\infty \bigabssqr{f_e'(s)} \dd s.
\end{equation*}
The full Hilbert and Sobolev spaces are
\begin{equation*}
  \HS:=\bigoplus_e \HS_e
  \qquadtext{and}
  \HS^1 := \bigoplus_e \HS_e^1 \cap \Cont {\Gamma},
\end{equation*}
where the sum runs over all the inner and outer edges.  More
explicitly, the Sobolev space $\HS^1$ consists of all functions in
$\Sob{I_e}$ on each edge, which are continuous on $\Gamma$, i.e.\ which
have a common value
\begin{equation*}
  f(v) := f_e(v) :=
  \begin{cases}
    f_e(0), & \text{if $e=j$ is an outer edge},\\
    f_e(-d), & \text{if $e=\{j,k\} \sim v=v_j$ is an inner
      edge, $j<k$,}\\
    f_e(d), & \text{if $e=\{j,k\} \sim v=v_k$ is an inner edge,}
  \end{cases}
\end{equation*}
for all edges $e \sim v$, i.e.\ adjacent with $v$.

The quadratic form on the intermediate graph $\Gamma(d)$ is given by
\begin{equation*}
  \qf h(f) := \sum_e \qf h_e(f_e)
\end{equation*}
for $f=(f_e)_e \in \HS^1$; the corresponding operator is the one
described in~\Sec{graphapprox} with $\delta$ interactions of strength
$w_j$ at vertex $v_j$ and of strength $w_e$ in the middle of the inner
edge $e=\{j,k\}$, as well as vector potential $A_{(j,k)}$ supported by
this edge.

For comparison reasons, we also need the \emph{free} quadratic form,
without both the magnetic potentials and the $\delta$ interactions,
which is given by
\begin{equation*}
  \qf d_e(f_e) := \int_{I_e} \abssqr{f_e(s)}\dd s
  \qquadtext{and}
  \qf d(f):=\sum_e \qf d_e(f_e)
\end{equation*}
with the same domains as $\qf h_e$ and $\qf h$, respectively.  It is
easy to see that $\qf d$ is a \emph{closed} quadratic form, i.e.\ that
$\dom \qf d=\HS^1$ with the norm given by $\normsqr[\HS^1] f :=
\normsqr f + \qf d(f)$ is complete, and therefore itself a Hilbert
space.  The operator corresponding to $\qf d$ is the \emph{free Laplacian} on $\Gamma(d)$, often also called \emph{Kirchhoff Laplacian} on the graph.

\begin{proposition}
  \label{prp:h.rel.bdd}
  \indent
  \begin{enumerate}
  \item
    \label{h.rel.bdd.i}
    The quadratic form $\qf h$ is relatively form-bounded with respect
    to $\qf d$ with relative bound zero. More precisely, for any
    $\eta>0$ there is a constant $C_\eta>0$ depending only on $\eta$,
    $d$, $\maxsymb A := \max_e\ \abs{A_e}$, and $\maxsymb w := 3\,
    \max_{e,v}\{\abs {w_e}, \abs {w_v}\}$ such that
    \begin{equation*}
      \bigabs{\qf h(f) - \qf d(f)} \le \eta\, \qf d(f) + C_\eta \normsqr f.
    \end{equation*}
    In particular, $\qf h$ is also a closed form.
  \item
    \label{h.rel.bdd.ii}
    We have $\qf d(f) \le 2 \bigl( \qf h(f) + C_{1/2} \normsqr f
    \bigr)$.
  \end{enumerate}
\end{proposition}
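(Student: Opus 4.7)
My plan is to write $\qf h - \qf d$ as a sum of three controllable pieces and estimate each using Young's inequality and a standard Sobolev trace inequality. I would start by expanding
\begin{equation*}
|f_e'(s) + \im A_e f_e(s)|^2 = |f_e'(s)|^2 + 2 A_e \Im\bigl(f_e'(s) \overline{f_e(s)}\bigr) + A_e^2 |f_e(s)|^2
\end{equation*}
on each inner edge, so that $\qf h(f) - \qf d(f)$ reduces to the sum of the magnetic cross terms $2 A_e \int_{I_e} \Im(f_e' \overline{f_e})\dd s$, the magnetic potential terms $A_e^2 \int_{I_e} |f_e|^2 \dd s$, and the $\delta$-boundary contributions $w_e |f_e(0)|^2$ at the midpoints of inner edges together with $w_v |f(v)|^2$ at the vertices (the latter encoded in $\qf h_e$ through the coefficients $w_j/\card{N_j}$).

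Next I would estimate each piece. The magnetic cross term on an inner edge is controlled by Cauchy--Schwarz followed by Young's inequality $2ab \le \eta a^2 + \eta^{-1} b^2$:
\begin{equation*}
\bigl| 2 A_e \textstyle\int_{I_e} \Im(f_e' \overline{f_e})\dd s \bigr|
\le \eta \int_{I_e} |f_e'|^2 \dd s + \eta^{-1} \maxsymb A^2 \int_{I_e} |f_e|^2 \dd s,
\end{equation*}
while the magnetic potential term is bounded trivially by $\maxsymb A^2 \normsqr{f_e}$. For the $\delta$-boundary terms, the key tool is the trace inequality
\begin{equation*}
|f(x)|^2 \le \eta \int_J |f'|^2 \dd s + (\eta^{-1} + L^{-1}) \int_J |f|^2 \dd s, \qquad f \in \Sob{J},\ x \in J,
\end{equation*}
valid on any interval $J$ of length $L$; it follows from $|f(x)|^2 - |f(y)|^2 = 2 \Re \int_y^x f' \overline{f} \dd s$ by averaging $y$ over $J$ and applying Young again. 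Applying this at $s = 0, \pm d$ on each inner edge (with $L = 2d$) and at $s = 0$ on each outer edge (with $J$ a fixed unit subinterval, $L=1$) bounds every boundary term by $\eta \int |f_e'|^2 + C(\eta,d) \int |f_e|^2$. Multiplying by the $\delta$-strengths --- where the factor $3$ built into $\maxsymb w$ together with $\card{N_j} \le n$ absorbs the vertex combinatorics --- and summing over the finitely many edges and vertices yields (i), with $C_\eta$ depending only on $\eta$, $d$, $\maxsymb A$, $\maxsymb w$. Closedness of $\qf h$ then follows from the KLMN theorem.

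Part (ii) would be immediate from (i) with $\eta = 1/2$: the estimate $\qf d(f) - \qf h(f) \le |\qf h(f) - \qf d(f)| \le \tfrac12 \qf d(f) + C_{1/2} \normsqr f$ rearranges to $\qf d(f) \le 2\bigl(\qf h(f) + C_{1/2} \normsqr f\bigr)$. The only genuinely delicate point in the whole argument is keeping track of the dependence of $C_\eta$ on the internal length $d$ --- it enters through the $L^{-1} = d^{-1}$ factor of the trace inequality on the inner edges --- but this is consistent with the statement, which permits $C_\eta$ to depend on $d$.
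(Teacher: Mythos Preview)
Your proposal is correct and follows essentially the same route as the paper: a Sobolev trace estimate for the point-evaluation ($\delta$) terms combined with a Young-type inequality for the magnetic contribution, summed edge by edge. The paper handles the magnetic term by applying the two-sided bound $(1+\eta)^{-1}a^2 - \eta^{-1}b^2 \le (a+b)^2 \le (1+\eta)a^2 + (1+\eta^{-1})b^2$ directly to $\lvert f_e' + \im A_e f_e\rvert$ rather than expanding the square, and it is slightly more explicit about rescaling the trace-inequality parameter so that the coefficient of $\qf d_e(f)$ becomes exactly $\eta$ after multiplication by the $\delta$-strengths (you should do this too), but the argument and the resulting dependence of $C_\eta$ on $d$, $\maxsymb A$, $\maxsymb w$ are the same.
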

\begin{proof}
  \eqref{h.rel.bdd.i}~On the interval $[-d,d]$ we have the following
  standard estimate
  \begin{equation}
    \label{eq:sob.est}
    \bigabssqr{f(s)} \le a \normsqr {f'} + \frac 2 a \normsqr f
  \end{equation}
  for all $s \in [-d,d]$, $0<a \le d$, and $f \in \Sob{-d,d}$.
  Moreover, for any $\eta>0$ and $a,b \in \R$ we have
  \begin{equation}
    \label{eq:quad.sum}
    \frac 1 {1+\eta} \cdot a^2 - \frac 1 \eta \cdot b^2
    \le (a+b)^2
    \le (1+\eta) \cdot a^2 + \Bigl(1 + \frac 1 \eta\Bigr) \cdot b^2.
  \end{equation}
  In particular, for an inner edge $e=\{j,k\}$ we have
  \begin{multline*}
    \qf h_e (f) - \qf d_e(f)
    = \normsqr{f' + \im A_e f} - \normsqr{f'}
      + w_e \bigabssqr{f(0)}
      + \frac{w_j}{\card {N_j}} \bigabssqr{f(-d)}
      + \frac{w_k}{\card {N_k}} \bigabssqr{f(d)}\\
    \le \Bigl(\frac \eta 2 + \maxsymb w_e a \Bigr) \normsqr{f'}
      + \Bigl( \Bigl(1 + \frac 2 \eta\Bigr) \abssqr{A_e}
        + \frac {2 \maxsymb w_e} a \Bigr)
      \normsqr f
  \end{multline*}
  on $[-d,d]$ using~\eqref{eq:sob.est} with $s \in \{-d,0,d\}$ and the
  upper estimate in~\eqref{eq:quad.sum} with $\eta/2$ instead of
  $\eta$, where
  \begin{equation*}
    \maxsymb w_e :=
       \abs{w_e}+
       \frac{\abs{w_j}} {\card {N_j}}+
       \frac{\abs{w_k}} {\card {N_k}}\,.
  \end{equation*}
  Choosing
  \begin{equation}
    \label{eq:def.a}
    a:= \min \Bigl\{ \frac \eta {2 \maxsymb w_e}, d \Bigr\}
  \end{equation}
  we can estimate the coefficient of $\qf d_e(f)=\normsqr {f'}$ by
  $\eta$.

  For the opposite inequality, we have
  \begin{equation*}
    \qf d_e(f) - \qf h_e (f)
    \le \Bigl(1 - \frac 1 {1 + \eta/2} + \maxsymb w_e a \Bigr) \normsqr{f'}
      + \Bigl(\frac {2 \abssqr{A_e}} \eta
             + \frac {2 \maxsymb w_e} a \Bigr)
      \normsqr f
  \end{equation*}
  using now the lower estimate in~\eqref{eq:quad.sum} with $\eta/2$.
  In particular, with $a$ as in~\eqref{eq:def.a} and with
  $1-(1+\eta/2)^{-1}\le \eta/2$ we can again estimate the coefficient
  of $\qf d_e(f)=\normsqr {f'}$ by $\eta$.  As constant $C_{\eta,e}$
  on each edge, we can therefore choose
  \begin{equation}
    \label{eq:def.c.eta.e}
    C_{\eta,e}
    := \Bigl( 1 + \frac 2 \eta \Bigr) \abssqr{A_e} +
    \max \Bigl\{ \frac {4 \maxsymb w_e^2} \eta,
                 \frac {2 \maxsymb w_e} d
         \Bigr\}.
  \end{equation}
  Summing up all contributions for each edge, we can choose $C_\eta :=
  \max_e C_{\eta,e}$, and this constant depends only on $\eta$, $d$,
  $\maxsymb A$ and $\maxsymb w$.

  \eqref{h.rel.bdd.ii}~follows with $\eta=1/2$.  In particular,
  \begin{equation}
    \label{eq:c.eta.dep}
    C_{1/2}=C_{1/2}(d,A,w)
    = \Err \bigl( \maxsymb A^2 \bigr)
    + \Err \bigl( \maxsymb w^2 \bigr)
    + \Err \Bigl( \frac{\maxsymb w} d \Bigr).
  \end{equation}
\end{proof}

\subsection{The spaces and quadratic form on the manifold level}
\label{sec:spaces.mfd}

We now define the manifold model as in~\cite{exner-post:09}.  For a
given $\eps \in (0, d]$ we associate a connected $(m+1)$-dimensional
manifold $X_\eps$ to the graph $\Gamma(d)$ as follows: To the edge $e$
and the vertex $v$ we associate the Riemannian manifolds
\begin{equation}
  \label{eq:mfd.ed}
  X_\edeps := I_e \times \eps Y_e \quadtext{and}
  X_\vxeps := \eps X_v,
\end{equation}
respectively, where $\eps Y_e$ is a manifold $Y_e$ of dimension
$m>0$ (called \emph{transverse manifold)} equipped with the metric
$h_\edeps:=\eps^2 h_e$.  More precisely, the so-called \emph{edge
  neighbourhood} $X_\edeps$ and the \emph{vertex neighbourhood} $\eps
X_\vxeps$ carry the metrics $g_\edeps=\de^2 s + \eps^2 h_e$ and
$g_\vxeps=\eps^2 g_v$, where $h_e$ and $g_v$ are $\eps$-independent
metrics on $Y_e$ and $X_v$, respectively.  We assume that for each
edge $e$ adjacent to $v$, the vertex neighbourhood $X_\vxeps$ has a
boundary component $\bd_e X_\vxeps=\eps \bd_e X_v$ \emph{isometric} to
the scaled transverse manifold $\eps Y_e$.  Fixing such an isometry
and assuming that $X_\vxeps$ has product structure near each of the
boundary components $\bd_e X_\vxeps$, we identify the boundary
component $\bd_v X_\edeps=\{0\} \times \eps Y_e$ of the edge
neighbourhood $X_\edeps$ with $\bd_e X_\vxeps$.

For simplicity, we assume here that the transversal manifold $Y_e$ has
no boundary and that its volume is normalised, i.e.\ $\vol_m Y_e=1$.

On a Riemannian manifold $X$, we denote by $\Lsqr X$ the Hilbert space
of square integrable functions on $X$ with respect to the natural
measure induced by the Riemannian metric.  Moreover, we denote by
$\Sob X$ the completion of the space of smooth functions with
compact support (not necessarily vanishing on the boundary of $X$)
with respect to the norm given by $\normsqr[\Sob X] u :=
\normsqr[\Lsqr X] u + \normsqr[\Lsqr X]{\de u}$, where $\de u$ denotes
the exterior derivative of $u$ on $X$.

We set
\begin{equation*}
  \HS_\edeps := \Lsqr{I_e,\mc K_\edeps}, \quad
  \mc K_\edeps := \Lsqr{\eps Y_e} \quadtext{and}
  \HS_\vxeps := \Lsqr{X_\vxeps}.
\end{equation*}
We will often identify an $\Lsqrspace$-function $u$ on $X_\edeps$ with
the vector-valued function $I_e \to \mc K_\edeps$, $s \mapsto
u(s):=u(s,\cdot)$.

For each inner edge, we set
\begin{equation*}
  \qf h_\edeps(u_e)
  := \int_{-d}^d
  \bigl(
    \bignormsqr{u_e'(s)+ \im A_e u_e(s)}
    + \qf k_\edeps(u_e(s))
  \bigr) \dd s
  +  \frac{w_e}{2\eps}
  \int_{-\eps}^\eps \bignormsqr{u_e(s)} \dd s,
\end{equation*}
where $u_e'$ denotes the derivative with respect to the longitudinal
variable $s$ and where
\begin{equation*}
  \qf k_\edeps(\phi) := \normsqr[\Lsqr {\eps Y_e}] {\de_{Y_e} \phi}.
\end{equation*}
Here, $\de_{Y_e} \phi$ is the exterior derivative on the manifold
$Y_e$.  For each outer edge we set
\begin{equation*}
  \qf h_\edeps(u_e) :=
  \int_0^\infty \bigl( \normsqr[\mc K_\edeps] {u_e'(s)} + \qf k_\edeps
  (u_e(s)) \bigr) \dd s
\end{equation*}
In both cases, $u_e \in \HS_\edeps^1=\Sob{X_\edeps}$.  On a vertex
neighbourhood, we set
\begin{equation*}
  \qf h_\vxeps(u_e) :=
  \normsqr[\Lsqr {X_\vxeps}] {\de_{X_v} u_v} +
  \frac {w_v}{\eps \vol X_v} \normsqr[\Lsqr {X_\vxeps}] {u_v}.
\end{equation*}

The total Hilbert spaces here are
\begin{equation}
  \label{eq:eps.dec}
  \HS_\eps:=\bigoplus_e \HS_\edeps \oplus \bigoplus_v \HS_\vxeps
  \qquadtext{and}
  \HS^1_\eps := \Sob {X_\eps},
\end{equation}
where the sum runs over all inner and outer edges.
Now, the quadratic form on the manifold $X_\eps$ is given by
\begin{equation*}
  \qf h_\eps(u)
  := \sum_e \qf h_\edeps(u_e) + \sum_v \qf h_\vxeps(u_e)
\end{equation*}
for $u \in \HS^1$ with the obvious notation $u_e:= u \restr{X_\edeps}$
and $u_v := u \restr{X_\vxeps}$.  The corresponding operator is a
magnetic Schr\"odinger operator on $X_\eps$ with (constant) potential
$w_e/(2\eps)$ on $[-\eps,\eps] \times \eps Y_e$ in the middle of an
edge neighbourhood and $w_v/(\eps \vol X_v)$ on each vertex
neighbourhood.  For the use of non-constant potentials we refer to
\cite{exner-post:09}.

For comparison reasons, we also need the \emph{free} quadratic form
(i.e.\ without magnetic and electric potentials), given by $\qf
d_\edeps(u_e)=\normsqr[\Lsqr{X_\edeps}]{\de u_e}$, $\qf
d_\vxeps(u_v)=\normsqr[\Lsqr{X_\vxeps}]{\de u_v}$ and
\begin{equation*}
  \qf d_\eps(u):= \normsqr[\Lsqr {X_\eps}] {\de u}
  = \sum_e \qf d_\edeps(u_e) + \sum_v \qf d_\vxeps(u_v)
\end{equation*}
with the same domains as for $\qf h_\edeps$, $\qf h_\vxeps$ and $\qf
h_\eps$.  Since we define $\HS^1 = \Sob{X_\eps}$ as the completion of
smooth functions with compact support with respect to the norm
$\normsqr[\HS^1_\eps] u := \qf d_\eps(u) + \normsqr u$, the quadratic
form $\qf d_\eps$ is \emph{closed}.  The operator corresponding to
$\qf d$ is the \emph{Laplacian} on $X_\eps$.

\begin{proposition}
  \label{prp:h.rel.bdd.mfd}
  \indent
  \begin{enumerate}
  \item
    \label{h.rel.bdd.mfd.i}
    The quadratic form $\qf h_\eps$ is relatively form-bounded with
    respect to $\qf d_\eps$ with relative bound zero.  More precisely,
    for any $\eta>0$ there is a constant $\wt C_\eta \ge C_\eta >0$
    depending only on $\eta$, $d$, $\maxsymb A := \max_e\ \abs{A_e}$,
    $\maxsymb w := 3\,\max_{e,v}\{\abs {w_e}, \abs {w_v}\}$ and $X_v$
    such that
    \begin{equation}
      \label{eq:h.rel.bdd.mfd}
      \bigabs{\qf h_\eps(u) - \qf d_\eps(u)} \le
      \eta\,\qf d_\eps(u) + \wt C_\eta \normsqr u
    \end{equation}
    for all $0 < \eps \le \eps_0$, where $\eps_0:= \eta
    c(v)/\abs{w_v}$ and where $c(v)$ is a constant depending only on
    $X_v$.  In particular, $\qf h_\eps$ is also a closed form.
  \item
    \label{h.rel.bdd.mfd.ii}
    We have $\qf d_\eps(u) \le 2 \bigl( \qf h_\eps(u) + \wt C_{1/2}
    \normsqr u \bigr)$.
  \end{enumerate}
\end{proposition}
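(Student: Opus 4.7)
The plan is to mirror the proof of \Prp{h.rel.bdd} edge by edge and to treat the vertex contribution separately by rescaling to the fixed model manifold $X_v$. On each edge neighbourhood the transverse Dirichlet form $\qf k_\edeps$ cancels in the difference $\qf h_\edeps - \qf d_\edeps$, so the graph-level argument carries over fibrewise in the transverse variable $y\in \eps Y_e$; on the vertex neighbourhoods the obstruction is the blow-up of the prefactor $\abs{w_v}/(\eps\vol X_v)$ as $\eps\to 0$, which I will resolve by a Poincar\'e-plus-trace argument that feeds the vertex term into the adjacent edge norms.

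For each edge $e$ the difference $\qf h_\edeps(u_e)-\qf d_\edeps(u_e)$ reduces to the magnetic cross term and the scaled potential $\frac{w_e}{2\eps}\int_{-\eps}^\eps\bignormsqr[\mc K_\edeps]{u_e(s)}\dd s$; both involve only the longitudinal derivative and the transverse $L^2$-norms, so the Sobolev estimate~\eqref{eq:sob.est} applied pointwise in $y\in\eps Y_e$ and then integrated in $y$ gives
\begin{equation*}
  \frac 1 {2\eps}\int_{-\eps}^\eps \bignormsqr[\mc K_\edeps]{u_e(s)}\dd s
  \le a\,\bignormsqr[\Lsqr{X_\edeps}]{\partial_s u_e}
    + \frac 2 a\,\bignormsqr[\Lsqr{X_\edeps}]{u_e}
\end{equation*}
for $0<a\le d$. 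Combined with~\eqref{eq:quad.sum} applied fibrewise to the magnetic cross term and with the choice~\eqref{eq:def.a} of $a$, one reproduces \Prpenum{h.rel.bdd}{h.rel.bdd.i} edge by edge, using $\bignormsqr[\Lsqr{X_\edeps}]{\partial_s u_e}\le \qf d_\edeps(u_e)$.

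The main obstacle is the vertex contribution $\frac{\abs{w_v}}{\eps\vol X_v}\bignormsqr[\Lsqr{X_\vxeps}]{u_v}$. Rescaling via $\tilde u_v(x):=u_v(\eps x)$ on the fixed compact manifold $X_v$ one has $\bignormsqr[\Lsqr{X_\vxeps}]{u_v}=\eps^{m+1}\bignormsqr[\Lsqr{X_v}]{\tilde u_v}$ and $\qf d_\vxeps(u_v)=\eps^{m-1}\bignormsqr[\Lsqr{X_v}]{\de \tilde u_v}$. Split $\tilde u_v = \bar u_v + u_v^\perp$ with $\bar u_v:=\frac 1{\vol X_v}\int_{X_v}\tilde u_v$; the Poincar\'e inequality on $X_v$ together with the above rescaling yields
\begin{equation*}
  \frac{\abs{w_v}}{\eps\vol X_v}\bignormsqr[\Lsqr{X_\vxeps}]{u_v^\perp}
  \le \frac{\abs{w_v}\eps}{\vol X_v\cdot\lambda_2(X_v)}\,\qf d_\vxeps(u_v)
  \le \eta\,\qf d_\vxeps(u_v)
\end{equation*}
whenever $\eps\le\eps_0=\eta c(v)/\abs{w_v}$ with $c(v):=\vol X_v\cdot\lambda_2(X_v)$. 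The mean piece contributes $\abs{w_v}\eps^m\abssqr{\bar u_v}$, and here the naive bound by $\bignormsqr[\Lsqr{X_\vxeps}]{u_v}$ is circular. Instead, picking any edge $e$ adjacent to $v$, I would first use a Neumann trace inequality on $X_v$ to bound $\abssqr{\bar u_v}$ by $2\abssqr{\bar u_{v,\bd_e}}+2c_1(v)\bignormsqr[\Lsqr{X_v}]{\de\tilde u_v}$, where $\bar u_{v,\bd_e}$ denotes the mean of $\tilde u_v$ over $\bd_e X_v$, and then use Cauchy--Schwarz together with the continuity of $u$ across $\bd_e X_\vxeps=\bd_v X_\edeps$ and $\vol_m Y_e=1$ to get $\abssqr{\bar u_{v,\bd_e}}\le\eps^{-m}\bignormsqr[\mc K_\edeps]{u_e\restriction_v}$, where $u_e\restriction_v$ denotes the edge trace at the endpoint adjacent to $v$. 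The $\eps$-powers cancel, leaving
\begin{equation*}
  \abs{w_v}\eps^m\abssqr{\bar u_v}
  \le 2\abs{w_v}\bignormsqr[\mc K_\edeps]{u_e\restriction_v}
    + 2\abs{w_v}c_1(v)\eps\,\qf d_\vxeps(u_v),
\end{equation*}
and a fibrewise one-dimensional trace estimate controls $\bignormsqr[\mc K_\edeps]{u_e\restriction_v}$ by $\frac 2{a'}\bignormsqr[\Lsqr{X_\edeps}]{u_e}+a'\,\qf d_\edeps(u_e)$ with $a'\in (0,d]$; choosing $a'=\eta/(2\abs{w_v})$ (capped at $d$) brings the mean piece into the form $\eta(\qf d_\vxeps(u_v)+\qf d_\edeps(u_e))+C(\eta,\abs{w_v},d,X_v)\bignormsqr[\Lsqr{X_\edeps}]{u_e}$.

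Summing the edge estimates and the vertex estimate just described — noting that every edge is adjacent to at most two vertices, so its norm is fed into at most two vertex estimates — gives~\eqref{eq:h.rel.bdd.mfd} with a constant $\wt C_\eta$ of the same structural form as~\eqref{eq:c.eta.dep}, augmented by the geometric constants $\lambda_2(X_v)^{-1}$ and $c_1(v)$ of the fixed manifold $X_v$; part~\eqref{h.rel.bdd.mfd.ii} follows by setting $\eta=1/2$. The delicate point throughout is keeping the $\eps$-powers aligned in the vertex estimate so that the rescaling, Poincar\'e, and the two levels of trace inequality combine in such a way that all $\eps$-dependence either cancels or leaves a positive power of $\eps$ that can be absorbed for $\eps\le\eps_0$.
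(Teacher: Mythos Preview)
Your proof is correct and follows the same overall architecture as the paper: treat edge neighbourhoods fibrewise exactly as in \Prp{h.rel.bdd}, and handle the vertex term separately by bounding $\normsqr[\Lsqr{X_\vxeps}]{u_v}$ in terms of $\qf d_\vxeps(u_v)$ and the adjacent edge norms. The only substantive difference is at the vertex. The paper does not carry out the rescaling, Poincar\'e split, and boundary--trace chain that you describe; instead it invokes~\cite[Lem.~2.9]{exner-post:09} as a black box to obtain directly
\[
  \normsqr[\Lsqr{X_\vxeps}]{u_v}
  \le \eps^2 C(v)\,\qf d_\vxeps(u_v)
     + 4\eps\,\cvol(v)\sum_{e\sim v}
       \Bigl(a\,\normsqr[\HS_\edeps]{u_e'} + \frac 2 a \normsqr[\HS_\edeps]{u_e}\Bigr),
\]
which is precisely the content your argument re-derives from first principles. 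Your version is self-contained and makes the geometric constants $c(v)=\vol X_v\cdot\lambda_2(X_v)$ and $c_1(v)$ explicit; the paper's version is shorter but leans on the earlier reference. Note also that the paper distributes the mean piece over \emph{all} edges $e\sim v$, whereas you feed it into a single adjacent edge; either choice works, and your bookkeeping remark that each edge is hit by at most two vertex estimates covers the summation correctly.
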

\begin{proof}
  The proof is very similar to the one of
  \Prp{h.rel.bdd}. For~\eqref{h.rel.bdd.mfd.i}, we have the following
  vector-valued version of~\eqref{eq:sob.est}, namely,
  \begin{equation}
    \label{eq:sob.est.mfd}
    \bignormsqr[\mc K_\edeps]{u_e(s)}
    \le a \normsqr[\HS_\edeps] {u_e'} + \frac 2 a \normsqr[\HS_\edeps] {u_e}
  \end{equation}
  for all $s \in [-d,d]$, $0<a \le d$ and $u \in \Sob{X_\edeps}$.  In
  particular, for an inner edge $e=\{j,k\}$ we have
  \begin{equation*}
    \bigabs{\qf h_\edeps (u_e) - \qf d_\edeps(u_e)}
    \le \eta \normsqr[\HS_\edeps]{u_e'}
      + C_{\eta,e} \normsqr[\HS_\edeps]{u_e}
  \end{equation*}
  with $C_{\eta,e}$ as in \eqref{eq:def.c.eta.e}.

  On a vertex neighbourhood, we have
  \begin{multline*}
    \bigabs{\qf h_\vxeps (u_v) - \qf d_\vxeps(u_v)}
    = \frac {\abs{w_v}}{\eps \vol X_v} \normsqr[\HS_\vxeps] {u_v}\\
    \le \frac {\abs{w_v}}{\eps \vol X_v}
    \Bigl(\eps^2 C(v) \normsqr[\Lsqr{X_\vxeps}]{\de u_v}
       + 4\eps \cvol(v) \sum_{e \sim v}
       \Bigl(
          a \normsqr[\HS_\edeps]{u_e'}
          + \frac 2 a \normsqr[\HS_\edeps]{u_e}
       \Bigr)
    \Bigl)
  \end{multline*}
  for $0 < a \le d$ using~\cite[Lem.~2.9]{exner-post:09}, where
  $\cvol(v) := \vol X_v/\vol_m{\bd X_v}$ and $C(v)$ is another
  constant depending only on $X_v$, see~\cite{exner-post:09} for
  details.  Setting
  \begin{equation*}
    a := \min\{d, \eta \vol X_v/(4 \cvol(v) \abs{w_v} \}
    \quadtext{and}
    \eps_0 := \min_v \frac{\vol X_v}{\abs{w_v} C(v)},
\end{equation*}
and summing up all contributions, we can choose $\wt C_\eta >0$ such
that~\eqref{eq:h.rel.bdd.mfd} holds for all $0 < \eps \le \eps_0$ with
  \begin{equation}
    \label{eq:c.eta.dep.mfd}
    \wt C_\eta=\wt C_\eta(d,A,w)
    = \Err \Bigl( \maxsymb A{}^2 \Bigl(1 + \frac 1 \eta \Bigr) \Bigr)
    + \Err \Bigl( \frac{\maxsymb w{}^2} \eta \Bigr)
    + \Err \Bigl( \frac{\maxsymb w} d \Bigr)
  \end{equation}
  and the error term depend additionally only on $X_v$.  The remaining
  assertion~\eqref{h.rel.bdd.mfd.ii} follows as before.
\end{proof}

%
\section{Convergence of the operators}
\label{sec:conv.op}
%

\subsection{Norm convergence of operators and forms acting in
  different Hilbert spaces}
\label{sec:quasi.uni}
Let us briefly review the concept of norm convergence of operators
acting in different Hilbert spaces introduced first
in~\cite[App.]{post:06}.  A general spectral theory for quasi-unitary
equivalent operators is developed in a more elaborated version
in~\cite[Ch.~4]{post:12}, see also~\cite{exner-post:09}.

Let $\HS$ and $\HS^1$ be Hilbert spaces such that $\HS^1$ is a dense
subspace of $\HS$ with $\norm[\HS] f \le \norm[\HS^1] f$ and similarly
for $\wt \HS^1 \subset \wt \HS$.  Let $\qf h$ and $\wt{\qf h}$ be
closed, quadratic forms, semi-bounded from below with domain $\HS^1$
and $\wt \HS^1$, respectively.

Let $\delta>0$.  We say that $\qf h$ and $\wt{\qf h}$ are
\emph{$\delta$-quasi-unitarily equivalent}\footnote{\label{fn:que} We
  warn the reader that in~\cite{post:12} the notion
  ``$\delta$-quasi-unitary equivalent'' is defined in a slightly more
  general way (allowing e.g.\ a second identification operator
  $\map{J'}{\wt \HS} \HS$ such that $\norm{J^* - J'} \le \delta$ to
  cover some more general situations).  This should not cause any
  confusion here.}  if there are so-called \emph{identification
  operators}
\begin{equation*}
  \map J \HS {\wt \HS}, \quad
  \map{J^1}{\HS^1} {\wt \HS^1} \quadtext{and}
  \map{J^{\prime 1}}{\wt \HS^1} {\HS^1},
\end{equation*}
such that these operators are \emph{$\delta$-quasi unitary}, i.e.\
\begin{subequations}
  \label{eq:closeness}
  \begin{gather}
    \label{eq:j1}
    \normsqr{Jf - J^1f} \le \delta^2 \normsqr[\HS^1] f,
      \hspace*{4ex}
    \normsqr{J^*u - J^{\prime 1}u}
    \le \delta^2 \normsqr[\wt \HS^1] u,\\
    \label{eq:j.inv}
    \normsqr{J^* J f - f} \le \delta^2 \normsqr[\HS^1] f,
          \hspace*{4ex}
    \normsqr{J J^* u - u} \le \delta^2 \normsqr[\wt \HS^1] u,\\
    \label{eq:j.comm.1}
    \bigabs{ \qf h(J^{\prime 1} u, f) - \wt{\qf h}(u, J^1 f)} \le
    \delta \norm[\wt \HS^1] u \norm[\HS^1] f
  \end{gather}
\end{subequations}
for $f$ and $u$ in the appropriate spaces.  The attribute
\emph{$\delta$-quasi-unitary} refers to the fact that we have a
quantitative generalisation of unitary operators. In particular, if
$\delta=0$, then a $\delta$-quasi-unitary operator is just unitary.

On the operator level, we have the following definition: Denote by $H$
and $\wt H$ the (self-adjoint) operators associated to $\qf h$ and
$\wt{\qf h}$.  We say that $H$ and $\wt H$ are
\emph{$\delta$-quasi-unitarily equivalent} (see again \Footnote{que})
if there is an identification operator $\map J \HS {\wt \HS}$ such
that
\begin{equation}
  \label{eq:q-u.op}
  \bignorm{(\id - J^* J) R^\pm} \le \delta, \quad
  \bignorm{(\id - J J^*) \wt R^\pm} \le \delta
  \quadtext{and}
  \bignorm{J R^\pm - \wt R^\pm J} \le \delta,
\end{equation}
where $\norm \cdot$ denotes the operator norm, and where $R^\pm := (H
\mp \im)^{-1}$ and $\wt R^\pm := (\wt H \mp \im)^{-1}$ denote the
resolvents, respectively.  The resolvent estimates are supposed to
hold for both signs.

We have the following relation between the quasi unitary equivalence
for forms and operators.  For convenience of the reader, we give a
short proof of the first assertion here.  The remaining assertions
follow from the abstract theory developed
  in~\cite[App.~A]{post:06} and~\cite[Ch.~4]{post:12}.
\begin{theorem}
  \label{thm:op.q-u-e}
  Let $\delta>0$ and $C \ge 1$.  Assume that $\qf h$ and
  $\wt{\qf h}$ are $\delta$-quasi-unitarily equivalent closed
  quadratic forms such that
  \begin{equation*}
    \normsqr[\HS^1] f \le 2(\qf h(f) + C \normsqr f)
    \quadtext{and}
    \normsqr[\wt \HS^1] u \le 2(\wt{\qf h}(u) + C \normsqr u)
  \end{equation*}
  for all $f \in \HS^1$ and $u \in \wt \HS^1$.  Then the following
  assertions hold:
  \begin{enumerate}
  \item
    \label{op.q-u-e.i}
    The associated operators $H$ and $\wt H$ are $(12 C \delta)$-quasi-unitarily equivalent.
  \item
    \label{op.q-u-e.ii}
    There is a universal constant $c(z)>0$ depending only on $z$ such
    that
    \begin{subequations}
      \begin{gather}
        \label{eq:q-u-e.iia}
        \norm{J(H - z)^{-1} - (\wt H - z)^{-1}J}
        \le c(z) C \delta,\\
        \label{eq:q-u-e.iib}
        \norm{J(H - z)^{-1}J^* - (H_\eps - z)^{-1}} \le c(z) C \delta
      \end{gather}
    \end{subequations}
    for $z \in \C \setminus \R$.  Moreover, we can replace the
    function $\phi(\lambda)=(\lambda-z)^{-1}$ in $\phi(H)=(H-z)^{-1}$
    etc.\ by any measurable, bounded function converging to a constant
    as $\lambda \to \infty$ and being continuous in a neighbourhood of
    $\spec H$.

  \item
    \label{op.q-u-e.iii}
    Assume that $\wt H=H_\eps$ is $\delta_\eps$-unitarily equivalent
    with $H$, where $\delta_\eps \to 0$, then the spectrum of $H_\eps$
    converges to the spectrum of $H$ in the sense that if
      $\lambda_\eps \in \spec {H_\eps}$ and $\lambda_\eps \to
      \lambda$, then $\lambda \in \spec H$, and if $\lambda \in \spec
      H$, then there exists $(\lambda_\eps)_\eps$ such that
      $\lambda_\eps \in \spec {H_\eps}$ and $\lambda_\eps \to 0$.
    The same is true for the essential spectrum.

  \item
    \label{op.q-u-e.iv}
    Assume as before that $\wt H = H_\eps$ is $\delta_\eps$-unitarily
    equivalent with $H$, where $\delta_\eps \to 0$, then for any
    $\lambda \in \disspec H$ there exists a family
    $\{\lambda_\eps\}_\eps$ with $\lambda_\eps \in \disspec {H_\eps}$
    such that $\lambda_\eps \to \lambda$ as $\eps \to 0$.  Moreover,
    the multiplicity is preserved.  If $\lambda$ is a simple
    eigenvalue with normalised eigenfunction $\phi$, then for $\eps$
    small enough there exists a family of simple normalised
    eigenfunctions $\{\phi_\eps\}_\eps$ of $H_\eps$ such that
    \begin{equation*}
      \norm[\Lsqr{X_\eps}]{J\phi - \phi_\eps} \to 0
    \end{equation*}
    holds as $\eps \to 0$.
  \end{enumerate}
\end{theorem}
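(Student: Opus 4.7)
The plan is to prove part~(i) directly from the form-level closeness conditions in~\eqref{eq:closeness}, and then to invoke the abstract machinery developed in~\cite[App.~A]{post:06} and~\cite[Ch.~4]{post:12} for parts~(ii)--(iv), which are standard consequences of operator-level quasi-unitary equivalence once the constants in~\eqref{eq:q-u.op} are controlled.

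First I would establish an a priori bound on the resolvents in the graph norm of $\HS^1$. Testing $(H\mp\im)R^\pm g = g$ against $R^\pm g$ and taking real parts yields $\qf h(R^\pm g) = \Re\langle g, R^\pm g\rangle \le \|g\|^2$, while self-adjointness of $H$ gives $\|R^\pm g\| \le \|g\|$. Substituting into the hypothesis $\|\cdot\|_{\HS^1}^2 \le 2(\qf h + C\|\cdot\|^2)$ produces $\|R^\pm g\|_{\HS^1} \le 2\sqrt{C}\,\|g\|$, and analogously $\|\wt R^\pm \tilde g\|_{\wt \HS^1} \le 2\sqrt{C}\,\|\tilde g\|$. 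Applying~\eqref{eq:j.inv} to $R^\pm g$ immediately yields the first two estimates in~\eqref{eq:q-u.op} with a constant of order $\sqrt{C}\delta$, which is dominated by $12C\delta$ since $C\ge 1$.

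The main step, and the one I expect to be the hardest, is the intertwining bound $\|JR^\pm - \wt R^\pm J\| \le 12C\delta$. Fix $g \in \HS$ and $\tilde g \in \wt\HS$, set $f := R^\pm g$ and $\tilde u := \wt R^\mp \tilde g$. Using $(\wt R^\pm)^* = \wt R^\mp$ together with the resolvent equations $(H\mp\im)f = g$ and $(\wt H\pm\im)\tilde u = \tilde g$, the $\pm\im$ contributions cancel and one obtains the clean identity
\[
  \langle (JR^\pm - \wt R^\pm J) g, \tilde g\rangle
  = \langle Jf, \wt H \tilde u\rangle - \langle JHf, \tilde u\rangle.
\]
The idea is now to insert $J \approx J^1$ in the left factor and $J^* \approx J^{\prime 1}$ in the right, producing
\[
  \bigl[\langle J^1 f, \wt H \tilde u\rangle - \langle Hf, J^{\prime 1}\tilde u\rangle\bigr]
  + \langle (J-J^1)f, \wt H\tilde u\rangle - \langle Hf, (J^* - J^{\prime 1})\tilde u\rangle.
\]
The bracketed difference is exactly what is controlled by the form-comparison~\eqref{eq:j.comm.1} applied to $(\tilde u, f)$, after using the symmetry of $\qf h$, $\wt{\qf h}$ and the fact that $f\in\dom H$, $\tilde u\in\dom \wt H$ to identify the form evaluations with the displayed inner products. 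The two remaining terms are directly controlled by~\eqref{eq:j1}. Combining with the elementary bounds $\|Hf\|\le 2\|g\|$ and $\|\wt H\tilde u\|\le 2\|\tilde g\|$ (from the resolvent equations) and with the $\HS^1$-bounds of the previous paragraph, the three contributions are at most $4C\delta$, $4\sqrt{C}\delta$, and $4\sqrt{C}\delta$ times $\|g\|\|\tilde g\|$, summing to at most $12C\delta\|g\|\|\tilde g\|$ as desired.

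With~(i) in hand, parts~(ii)--(iv) do not present genuinely new obstacles. The resolvent comparison estimates~\eqref{eq:q-u-e.iia} and~\eqref{eq:q-u-e.iib} follow by sandwiching $(H-z)^{-1}$ between $J^*J$ and $\id$ and moving the spectral parameter $z$ around via the first resolvent identity, while the extension from $(\lambda-z)^{-1}$ to the general class of bounded measurable functions continuous near $\spec H$ and converging at infinity uses a standard approximation/Helffer--Sj\"ostrand argument. Spectral convergence~(iii) and convergence of the discrete eigenvalues and their eigenfunctions~(iv) then follow by applying the functional calculus to suitable spectral cutoffs, exactly as worked out in~\cite[App.~A]{post:06} and~\cite[Ch.~4]{post:12}.
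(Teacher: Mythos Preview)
Your proof is correct and follows the same overall strategy as the paper: bound the resolvent in the $\HS^1$-norm, then decompose the intertwining pairing via $J\approx J^1$ and $J^*\approx J^{\prime 1}$ so that~\eqref{eq:j.comm.1} applies, with parts~(ii)--(iv) deferred to the cited abstract theory. Your execution is in fact slightly cleaner than the paper's in two places: you obtain the bound $\norm[\HS^1]{R^\pm g}\le 2\sqrt C\,\norm g$ directly from $\qf h(R^\pm g)=\Re\iprod g{R^\pm g}\le\normsqr g$, whereas the paper goes through the algebraic identity $\abssqr{\qf h(f)+\normsqr f}\le 2\normsqr f\,\normsqr{(H\mp\im)f}$; and you cancel the $\pm\im$ contributions \emph{before} inserting $J^1$ and $J^{\prime 1}$, so your decomposition has three terms rather than the paper's five. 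Both routes arrive at the same constant $4C+8\sqrt C\le 12C$.
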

\begin{proof}
  \eqref{op.q-u-e.i}~From our assumption, we have
  \begin{equation*}
    \normsqr[\HS^1] f
    \le 2 (\qf h(f) + C \normsqr f)
    = 2 \bigabs{\qf h(f) + \normsqr f} + 2 (C-1) \normsqr f.
  \end{equation*}
  Moreover, the first term can be estimated as
  \begin{align*}
    \bigabssqr{\qf h(f) + \normsqr f}
    &\le 2 \bigl(\qf h(f)^2 + \norm f^4\bigr)\\
    &= 2 \bigabs{\qf h(f) - \im \normsqr f}
         \bigabs{\qf h(f) + \im \normsqr f}\\
    &= 2 \bigabs{\iprod {(H \mp \im)f} f}
         \bigabs{\iprod f {(H \mp \im)f}}\\
    &\le 2 \normsqr f \normsqr{(H \mp \im)f}
    \le 2 \norm {(H \mp \im)f}^4
  \end{align*}
  using $\norm{(H \mp \im)^{-1}} \le 1$ at the last step.  In
  particular, we have
  \begin{equation}
    \normsqr[\HS^1] f
    \le (2 \sqrt 2 +2C - 2) \normsqr{(H \mp \im)f}
    \le 4 C \normsqr{(H \mp \im)f}
  \end{equation}
  since $2 \sqrt 2 - 2 \le 2 \le 2C$.  Similarly, we can show the same
  estimate for $u$, and we have
  \begin{equation}
       \label{eq:est.1.2}
    \norm[\HS^1] f
    \le 2 \sqrt C \norm{(H \mp \im)f}
    \quadtext{and}
    \norm[\wt \HS^1] u \le 2 \sqrt C \norm{(\wt H \mp \im)u}.
  \end{equation}
  Therefore, we conclude
  \begin{equation*}
    \norm{f - J^* J f}
    \le \delta \norm[\HS^1] f
    \le 2 \sqrt C\delta \norm{(H \mp \im)f}
  \end{equation*}
  by~\eqref{eq:j.inv}, and in particular, $\norm{(\id - J^* J)R^\pm}
  \le 2 \sqrt C \delta$.  The second norm estimate
  in~\eqref{eq:q-u.op} follows similarly.

  For the last norm estimate of the quasi-unitary equivalence of the
  operators in~\eqref{eq:q-u.op}, set $f := R^\pm g \in \dom H$ and $u
  := \wt R^\mp v \in \dom \wt H$.  Then we have
  \begin{align*}
    \iprod {(JR^\pm- \wt R^\pm J)g} v
    &= \iprod {Jf} v - \iprod g {J^* u}\\
    &= \iprod {(J-J^1)f} v
      + \bigl(\iprod {J^1 f} {(\wt H \pm \im) u}
             -\iprod {(H \mp \im) f} {J^{\prime 1} u}
        \bigr) \\
    &\hspace{.45\columnwidth} {}+ \iprod g {(J^{\prime 1}- J^*) u}\\
    &= \iprod  {(J-J^1)f} v
     +  \bigl( \wt{\qf h}(J^1 f, u)
               -\qf h(f, J^{\prime 1} u) \bigr)
     +\iprod g {(J^{\prime 1}- J^*) u}\\
     &\hspace{.3\columnwidth} {}\mp \im \bigl(
             \iprod {(J^1 - J)f} u
              +\iprod f {(J^* - J^{\prime 1}) u}
           \bigr),
  \end{align*}
  and therefore
  \begin{equation}
    \label{eq:est.1.2b}
    \bigabs{\iprod {(JR^\pm - \wt R^\pm J)g} v}
    \le (2 \sqrt C+ 4 C + 3 \cdot 2 \sqrt C) \delta \norm g \norm v
    \le 12 C \delta \norm g \norm v
  \end{equation}
  using~\eqref{eq:closeness} and~\eqref{eq:est.1.2}.

  Once we have the estimates of the quasi-unitary equivalence
  in~\eqref{eq:q-u.op}, the remaining assertions follow as
  in~\cite[App.~A]{post:06} or~\cite[Ch.~4]{post:12}.
\end{proof}

We remark that the convergence of higher-dimensional eigenspaces
is also valid, however, it requires some technicalities which we
skip here.

\begin{remark}
  \label{rem:why.c}
  Note that we only obtain the quasi-unitary equivalence of the
  operators with a factor $C$ and not $\sqrt C$.  This is due to the
  fact that from~\eqref{eq:j.comm.1}, we collect two factors $2 \sqrt
  C$ for the estimates $\norm[\HS^1] {R^\mp g} \le 2 \sqrt C
  \norm[\HS] g$ and $\norm[\wt \HS^1] {\wt R^\mp v} \le 2 \sqrt C
  \norm[\wt \HS] v$ in~\eqref{eq:est.1.2b}.
\end{remark}

\subsection{Quasi-unitary equivalence between the graph and manifold forms}
\label{sec:quasi.uni.model}

We now apply the abstract results of the previous section to our
problem where
\begin{align}
  \label{eq:spaces}
    \HS       &:= \Lsqr {\Gamma^{S,T}(d)}, &
    \HS^1     &:= \Sob  {\Gamma^{S,T}(d)}, &
    \wt \HS   &:= \Lsqr {X_\eps}, &
    \wt \HS^1 &:= \Sob  {X_\eps}.
\end{align}
We start with the definition of the identification operator on an
edge. Let
\begin{equation*}
  \map {J_e}{\HS_e=\Lsqr{I_e}}{\HS_\edeps=\Lsqr{X_\edeps}}
  \quadtext{be given by}
  J_e f_e := f_e \otimes \1_\edeps,
\end{equation*}
where $\1_\edeps$ is the (constant) eigenfunction of $Y_e$ associated
to the lowest (zero) eigenvalue equal to $\eps^{-m/2}$.  Since we
assumed $\vol Y_e=1$, the eigenfunction is normalised.  Its adjoint
acts as transverse averaging,
\begin{equation*}
  (J_e^* u_e)(s)=\iprod[\mc K_\edeps] {u_e(s)} {\1_\edeps}
  = \eps^{m/2} \int_{Y_e} u_e(s, y_e) \dd y_e.
\end{equation*}

Before defining the global identification operator, we need the
following result:
\begin{lemma}
  \label{lem:av.est}
  For $0<d \le 1$, $0 < \eps \le 1$ and $f, g \in \Sob{[-d,d]}$ we
  have
  \begin{equation}
    \label{eq:sob.est2}
    \Bigabs{\frac 1 {2\eps} \int_{-\eps}^\eps f(s) \conj g(s) \dd s -
      f(0) \conj g(0)}
    \le 2 (\eps/d)^{1/2} \norm[\Sobspace 1] f \norm[\Sobspace 1] g.
  \end{equation}
\end{lemma}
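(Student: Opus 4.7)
The plan is to rewrite the difference as the average of pointwise differences, decompose the integrand via the algebraic identity
\[
 f(s)\conj{g(s)} - f(0)\conj{g(0)} = (f(s)-f(0))\conj{g(s)} + f(0)(\conj{g(s)}-\conj{g(0)}),
\]
and then estimate each of the two resulting terms separately. The key tool is the fundamental theorem of calculus, which gives for $s \in [-\eps,\eps]$ (with $\eps \le d$, implicit from the earlier setup)
\[
 \bigabs{f(s) - f(0)} = \Bigabs{\int_0^s f'(t) \dd t} \le \abs s^{1/2} \norm[\Lsqr{[-d,d]}]{f'},
\]
and likewise for $g$, via Cauchy--Schwarz.

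To bound the first term, I would apply Cauchy--Schwarz in the $s$-variable on $[-\eps,\eps]$. Integrating the above square gives $\int_{-\eps}^\eps \abs{f(s)-f(0)}^2 \dd s \le \eps^2 \normsqr[\Lsqr{[-d,d]}]{f'}$, so the $L^2$-norm of $s \mapsto f(s)-f(0)$ on $[-\eps,\eps]$ is at most $\eps \norm{f'}_{L^2}$. For the factor $g(s)$, I would use the already-established Sobolev embedding~\eqref{eq:sob.est} with $a = d$ and $d \le 1$ to obtain the pointwise bound $\abs{g(s)}^2 \le (2/d) \normsqr[\Sobspace 1] g$, whence $\norm[\Lsqr{[-\eps,\eps]}] g \le 2 (\eps/d)^{1/2} \norm[\Sobspace 1] g$. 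Combining these gives a bound of order $\eps^{3/2} d^{-1/2} \norm[\Sobspace 1] f \norm[\Sobspace 1] g$ on the first piece before dividing by $2\eps$, producing the desired $(\eps/d)^{1/2}$ scaling.

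For the second term $f(0) \int_{-\eps}^\eps (\conj{g(s)}-\conj{g(0)}) \dd s$, I would bound $\abs{f(0)}$ using~\eqref{eq:sob.est} again to get $\abs{f(0)} \le (2/d)^{1/2} \norm[\Sobspace 1] f$, and estimate $\int_{-\eps}^\eps \abs{g(s)-g(0)}\dd s \le \int_{-\eps}^\eps \abs s^{1/2} \norm{g'}_{L^2}\dd s = \tfrac{4}{3}\eps^{3/2} \norm{g'}_{L^2}$. After dividing by $2\eps$, this yields another contribution of order $(\eps/d)^{1/2} \norm[\Sobspace 1] f \norm[\Sobspace 1] g$. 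Summing both contributions produces the claimed inequality with the constant $2$ (a routine arithmetic check confirms the total constant stays below $2$).

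There is no significant obstacle here: the lemma is essentially a quantitative version of the continuity of the trace at $0$ combined with averaging over a small interval, and both ingredients are standard one-dimensional Sobolev estimates. The only minor subtlety is tracking the two independent small parameters $\eps$ and $d$ so that the Sobolev embedding contributes precisely one factor of $d^{-1/2}$ and the averaging contributes one factor of $\eps^{1/2}$, yielding the sharp dependence $(\eps/d)^{1/2}$ rather than either $\eps^{1/2}$ or $d^{-1/2}$ alone.
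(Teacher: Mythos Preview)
Your proof is correct and follows essentially the same approach as the paper's: the same algebraic decomposition of the integrand, the same two ingredients (the Sobolev embedding~\eqref{eq:sob.est} giving $\abs{f(s)}^2 \le (2/d)\normsqr[\Sobspace 1] f$, and the fundamental-theorem-of-calculus bound $\abs{f(s)-f(0)}^2 \le \abs s \normsqr{f'}$), and the same Cauchy--Schwarz argument on the first term. The only cosmetic difference is that for the second term you integrate $\abs s^{1/2}$ directly to get $\tfrac{4}{3}\eps^{3/2}$, whereas the paper applies Cauchy--Schwarz once more to obtain $\sqrt{2}\,\eps^{3/2}$; both lead to the stated constant~$2$.
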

\begin{proof}
  Note first that
  \begin{equation}
    \label{eq:est1}
    \abssqr{f(s)} \le \frac 2 d \normsqr[\Sobspace 1] f
  \end{equation}
  for $s \in [-d,d]$ by~\eqref{eq:sob.est} since $d\in (0,1]$ by
  assumption.  From $f(s)-f(0) = \int_0^s f'(t) \dd t$ we conclude
  \begin{equation}
    \label{eq:est2}
    \bigabssqr{f(s)-f(0)} \le \abs s \normsqr{f'}.
  \end{equation}
  Now, the left-hand side of~\eqref{eq:sob.est2} can be estimated by
  \begin{multline*}
    \frac 1 {2\eps}
      \int_{-\eps}^\eps \bigabs{f(s)-f(0)} \abs{g(s)} \dd s
      +
    \frac {\abs{f(0)}} {2\eps}
      \int_{-\eps}^\eps \bigabs{g(s)-g(0)} \dd s\\
    \le \frac 1 {2\eps}
    \Bigl(\int_{-\eps}^\eps \abs s \dd s \normsqr{f'}
    \int_{-\eps}^\eps \abssqr{g(s)} \dd s \Bigr)^{1/2}
      + \frac 1 {2\eps}
    \Bigl(\frac 2 d \normsqr[\Sobspace 1] f
         2 \eps\int_{-\eps}^\eps \abs s \dd s \normsqr{g'}
    \Bigr)^{1/2}\\
    \le \frac 1 2 \norm{f'} \sqrt{\frac 2 d}
        \norm[\Sobspace 1] g \sqrt {2\eps}
      + \frac 1 2 \sqrt{\frac 2 d}
        \norm[\Sobspace 1] f \sqrt {2\eps} \norm{g'}
  \end{multline*}
  using~\eqref{eq:est1}--\eqref{eq:est2} together with Cauchy-Schwarz
  inequality, from where the desired estimate follows.
\end{proof}

We can now compare the two contributions of the quadratic forms on an
internal edge, including the potential in the middle of this edge.  We
could consider this inner point as a vertex, too, and use the
arguments for vertex neighbourhoods as in~\cite{exner-post:09}.  Since
this vertex has degree two only, we give a direct (and simpler) proof
here:
\begin{lemma}
  \label{lem:qf.ed}
  We have
  \begin{equation*}
    \bigabs{ \qf h_\edeps(J_e f_e, u_e) - \check{\qf h}_e(f_e, J_e^* u_e)}
    \le 2\abs{w_e} (\eps/d)^{1/2}\norm[\Sob \Gamma] f \norm[\Sob{X_\eps}] u
  \end{equation*}
  for all $f \in \HS^1=\Sob \Gamma$, $u \in \wt \HS^1 = \Sob{X_\eps}$,
  $0 < \eps \le 1$ and $0 < d \le 1$.
\end{lemma}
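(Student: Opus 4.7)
The plan is to exploit the product structure of $J_e f_e = f_e \otimes \1_\edeps$ and show that all contributions to $\qf h_\edeps(J_e f_e, u_e)$ except the middle-point potential already match $\check{\qf h}_e(f_e, J_e^* u_e)$ exactly; only the smeared potential term produces an error that is controlled by the previous lemma.

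First I would unpack the sesquilinear form $\qf h_\edeps(J_e f_e, u_e)$ using $(J_e f_e)(s) = f_e(s)\1_\edeps$ and the fact that $\1_\edeps$ is the normalised constant eigenfunction on $\eps Y_e$. Two immediate simplifications arise: (i) $\de_{Y_e}(J_e f_e)(s) = 0$, so the transverse part $\qf k_\edeps(J_e f_e(s))$ vanishes identically; (ii) the longitudinal/magnetic term factors, and using $\iprod[\mc K_\edeps]{\1_\edeps}{v} = \conj{(J_e^* v)}$ pointwise in $s$, together with $(J_e^* u_e)'(s) = J_e^*(u_e'(s))$ (obtained by differentiating the averaging integral under the sign), I find
\begin{equation*}
 \int_{-d}^d \bigiprod{f_e'(s)+\im A_e f_e(s)\,\1_\edeps}{u_e'(s)+\im A_e u_e(s)} \dd s
 = \int_{-d}^d \bigl(f_e'(s)+\im A_e f_e(s)\bigr) \conj{\bigl((J_e^* u_e)'(s)+\im A_e (J_e^* u_e)(s)\bigr)} \dd s,
\end{equation*}
which matches the corresponding contribution to $\check{\qf h}_e(f_e, J_e^* u_e)$.

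Consequently the entire discrepancy reduces to the middle-vertex term
\begin{equation*}
 \frac{w_e}{2\eps}\int_{-\eps}^\eps f_e(s)\,\conj{(J_e^* u_e)(s)} \dd s
 \;-\; w_e\, f_e(0)\,\conj{(J_e^* u_e)(0)}.
\end{equation*}
Applying \Lem{av.est} to $f := f_e$ and $g := J_e^* u_e$ on $[-d,d]$ bounds this difference by $2\abs{w_e}(\eps/d)^{1/2}\norm[\Sobspace 1]{f_e}\norm[\Sobspace 1]{J_e^* u_e}$. The proof concludes by dominating the local Sobolev norms by the global ones: clearly $\norm[\Sobspace 1]{f_e}\le\norm[\Sob\Gamma]{f}$ since $I_e\hookrightarrow\Gamma$, and $\norm[\Sobspace 1]{J_e^* u_e}\le\norm[\Sob{X_\edeps}]{u_e}\le\norm[\Sob{X_\eps}]{u}$ because $J_e^*$ is a contraction on $\Lsqrspace$ and its derivative is obtained by commuting with $\partial_s$, so $\norm{(J_e^* u_e)'}\le\norm{u_e'}$.

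I do not anticipate a serious obstacle here; the only care needed is the algebraic verification that the magnetic longitudinal term really factors cleanly through $J_e^*$ (which rests on $\1_\edeps$ being constant and normalised) so that the transverse gradient produces no spurious cross terms. Once this is checked, the estimate falls out of \Lem{av.est} with the displayed constant $2\abs{w_e}(\eps/d)^{1/2}$.
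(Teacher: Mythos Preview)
Your proposal is correct and follows essentially the same approach as the paper: both arguments observe that the longitudinal/magnetic contributions match exactly through $J_e^*$ (since $\1_\edeps$ is constant and normalised), that the transversal form $\qf k_\edeps$ contributes nothing on $J_e f_e$, and then reduce to the smeared-versus-pointwise potential term handled by \Lem{av.est}. Your write-up is in fact slightly more explicit than the paper's about why $\norm[\Sobspace 1]{J_e^* u_e}\le\norm[\Sob{X_\eps}]{u}$, but the strategy is identical.
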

\begin{proof}
  We have
  \begin{multline*}
    \qf h_\edeps(J_e f_e, u_e) - \check{\qf h}_e(f_e, J_e^* u_e)\\
    = \int_{-d}^d
    \Bigl(
      \iprod[\mc K_\edeps]
         {(f'_e \otimes \1_\edeps + \im A_e f_e \otimes \1_\edeps)(s)}
         {u_e(s)}
      - (f_e'(s) + \im A_e f_e(s))
      \conj{\iprod[\mc K_\edeps] {u_e(s)}{\1_\edeps}}
    \Bigr) \dd s\\
    + w_e
    \Bigl(
      \int_{-\eps}^\eps
         \iprod[\mc K_\edeps]{f_e(s) \1_\edeps} {u_e(s)}
      \dd s
      - f_e(0) \conj{\iprod[\mc K_\edeps]{u_e(0)}{\1_\edeps}}
    \Bigr).
  \end{multline*}
  Note that in the first integral the term with the derivatives and
  the magnetic potential contributions respectively cancel.  Moreover,
  the expression contains no contribution from the transversal
  (sesquilinear) form $\qf k_\edeps$ since $\qf k_\edeps(\1_\edeps,
  \phi)=0$ for any $\phi \in \Lsqr {\eps Y_e}$.  The remaining
  (electric) potential term can be estimated by \Lem{av.est} with
  $f=f_e$ and $g(s)=\iprod[\mc K_\edeps] {u_e(s)}{\1_\edeps}$).
\end{proof}
As the global identification operator we define $\map J \HS {\wt \HS}$ by
\begin{equation*}
  J f := \bigoplus_e J_e f_e
           \oplus 0
\end{equation*}
with respect to the decomposition~\eqref{eq:eps.dec}.  In order to
relate the Sobolev spaces of order one we correct the error made at
the vertex neighbourhood by fixing the function to be constant there.
Namely, we define $\map {J^1} {\HS^1} {\wt \HS^1}$ by
\begin{equation*}
  J^1 f := \bigoplus_e J_e f_e
           \oplus \eps^{-m/2} \bigoplus_v f(v) \1_v,
\end{equation*}
where $\1_v$ is the constant function on $X_v$ with value $1$.  Since
$f$ is continuous on the graph, $Jf$ is continuous along the vertex and
edge neighbourhood boundary, and therefore maps into  the Sobolev
space $\wt \HS^1=\Sob{X_\eps}$.

For the operator $\map {J^{\prime 1}} {\wt \HS^1} {\HS^1}$, we have to
modify $J^*$ in such a way that the first order spaces are respected, namely we set
\begin{align*}
  (J_e^{\prime 1} u)(s):=
  (J_e^* u_e)(s)
    &+ \chi_-(s)  \eps^{m/2} \bigl(\avint_{v_j} u - (J_e^* u_e)(-d)\bigr)\\
    &+ \chi_+(s)  \eps^{m/2} \bigl(\avint_{v_k} u - (J_e^* u_e)(d)\bigr)
\end{align*}
on an inner edge $e=\{j,k\}$, $j<k$, where $\chi_\pm$ are smooth
functions with $\chi_\pm(\pm d)=1$, $\abs{\chi_\pm'} \le 2/d$ and
$\chi_\pm(s)=0$ for $\pm s \le 0$.  Moreover,
\begin{equation*}
  \avint_v u
  := \frac 1 {\vol X_v} \iprod {u_v} {\1_v}
  := \frac 1 {\vol X_v} \int_{X_v} u_v \dd x_v
\end{equation*}
is the average of a function $u$ on the (unscaled) vertex
neighbourhood $X_v$.

On an outer edge $e=j$ we set
\begin{equation*}
  (J_e^{\prime 1} u)(s):=
  (J_e^* u_e)(s)
    + \chi(s)  \eps^{m/2} \bigl(\avint_{v_j} u - (J_e^* u_e)(0)\bigr)
\end{equation*}
where $\chi$ is a smooth function with $\chi(0)=1$, $\abs{\chi'} \le
2$ and $\chi(s)=0$ for $s \ge 1$.  Note that $J^{\prime 1}$ differs
from $J^*f$ only by a correction near the vertices.  Since $(J^{\prime
  1} u)_e(v)= \eps^{m/2}\avint_v u$ independently of $e \sim v$, the
function $J^{\prime 1} u$ is indeed continuous, and therefore an
element of $\Sob \Gamma$.

Now we can make a claim which represents the main technical ingredient in the analysis of the
  two quadratic forms:
\begin{proposition}
  \label{prp:q-u-e}
  Let $0 < d \le 1$, then the quadratic forms $\qf h_\eps$ and $\qf h$
  are $\delta_\eps$-quasi-unitary equivalent, where $\delta_\eps$
  depends on $\eps$, $d$ and $\maxsymb w := 3\,\max_{e,v} \{\abs{w_e},
  \abs{w_v}\}$ as follows
  \begin{equation*}
    \delta_\eps =
        \Err \Bigl( \Bigl(\frac \eps d \Bigr)^{1/2}
                     (\maxsymb w + 1)\Bigr)
        + \Err \Bigl( \frac{\eps^{1/2}} d \Bigr).
   \end{equation*}
   Moreover, the error depends additionally only on $X_v$ and $Y_e$.
\end{proposition}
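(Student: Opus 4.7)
The plan is to verify the three defining conditions \eqref{eq:j1}--\eqref{eq:j.comm.1} of $\delta_\eps$-quasi-unitarity for the identification operators $J$, $J^1$, $J^{\prime 1}$ constructed above. Condition \eqref{eq:j.inv} is the easiest: since $\vol_m Y_e = 1$, transverse averaging gives $J^*J = \id$ exactly, so the first estimate holds with zero error. For $\normsqr{JJ^* u - u}$, the edge contributions are the squared norms of the projections of $u_e$ onto the orthogonal complement of the constant transverse mode and are bounded by $\eps^2 \lambda_2(Y_e)^{-1} \qf d_\edeps(u_e)$ via the transverse spectral gap; the vertex contributions $\normsqr[\Lsqr{X_\vxeps}]{u_v}$ are of order $\eps$ by a trace-type estimate on $X_\vxeps$ analogous to the one used in the proof of \Prp{h.rel.bdd.mfd}. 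Both errors are easily absorbed in the claimed bound since $\eps \le (\eps/d)^{1/2}$ for $d \le 1$.

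For condition \eqref{eq:j1}, the difference $Jf - J^1 f$ is supported on the vertex neighbourhoods and equals $\eps^{-m/2} f(v) \1_v$ on each $X_\vxeps$. Combining $\vol X_\vxeps = \eps^{m+1} \vol X_v$ with the trace bound $\abssqr{f(v)} \le (2/d) \normsqr[\HS^1] f$ from~\eqref{eq:est1} yields $\normsqr{Jf - J^1 f} = \Err(\eps/d) \normsqr[\HS^1] f$. The second half is similar: $J^* u - J^{\prime 1} u$ is supported where a cutoff $\chi_\pm$ is nonzero, with pointwise magnitude bounded by $\eps^{m/2} \bigabs{\avint_v u - (J_e^* u_e)(\pm d)}$; a trace-type estimate on $X_\vxeps$ as in~\cite{exner-post:09} shows that this quantity is itself of order $\eps^{1/2} \norm[\wt \HS^1] u$, so that taking the $\Lsqrspace$ norm over the support of length at most $d$ again yields the required $(\eps/d)^{1/2}$ order.

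Condition \eqref{eq:j.comm.1} is the main technical step. Write $J^{\prime 1}_e u = J_e^* u_e + c_e$ on each edge, where $c_e$ is the $\chi_\pm$-cutoff correction. On each inner edge, \Lem{qf.ed} directly controls the bulk comparison between $\check{\qf h}_e(J_e^* u_e, f_e)$ and $\qf h_\edeps(u_e, J_e f_e)$ to within an error of order $(\eps/d)^{1/2} \maxsymb w$, absorbing the magnetic potential and the middle $\delta$-interaction; outer edges are similar but simpler. The residual discrepancies arise from three sources. First, the correction $\check{\qf h}_e(c_e, f_e)$ contributes terms involving $c_e'$ with $\Linfty$-norm bounded by $(2/d) \cdot \eps^{m/2} \bigabs{\avint_v u - J_e^* u_e(\pm d)}$; combining with~\eqref{eq:est1} and the pointwise trace bound on the factor in the bracket produces both the $(\eps/d)^{1/2} (\maxsymb w + 1)$ and the $\eps^{1/2}/d$ orders (the middle $\delta$-term vanishes because $\chi_\pm(0)=0$). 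Second, on the manifold side the vertex term $\qf h_\vxeps(u_v, J^1 f \restr{X_\vxeps})$ reduces to the electric potential part since $\de \1_v = 0$, and by~\eqref{eq:mfd.ed} and the $\eps^{m+1}$-scaling of the volume form it equals $w_v \eps^{m/2} \conj{f(v)} \avint_v u$. Third, the graph-side $\delta$-contributions at the vertex $v = v_j$, namely $\sum_{e \sim v_j} (w_j / \card{N_j}) (J^{\prime 1}_e u)(v_j) \conj{f_e(v_j)}$, evaluate to $w_j \eps^{m/2} \conj{f(v)} \avint_v u$ after substituting $(J^{\prime 1}_e u)(v_j) = \eps^{m/2} \avint_v u$ and $f_e(v_j) = f(v)$.

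The main obstacle is the precise matching in the last two sources: the vertex-neighbourhood potential contribution on the manifold side cancels exactly against the sum of the graph-level $\delta$-interactions at $v_j$ (with $w_v = w_j$), thanks precisely to the specific choices of the constant factor $\eps^{-m/2}$ in $J^1$ and of the averaged vertex value $\eps^{m/2} \avint_v u$ built into $J^{\prime 1}$ through the $\chi_\pm$-corrections. Once this cancellation is verified and combined with the estimates from \Lem{qf.ed} and from the $c_e$-correction analysis in the first source, the total error matches the stated bound $\delta_\eps = \Err((\eps/d)^{1/2}(\maxsymb w + 1)) + \Err(\eps^{1/2}/d)$, and the proposition follows, with the constants depending additionally only on $X_v$ and $Y_e$ through $\vol X_v$, the transverse spectral gap of $Y_e$, and the constants appearing in the vertex-neighbourhood trace estimates.
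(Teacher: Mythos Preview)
Your proof is correct and follows essentially the same route as the paper. The paper's own argument is in fact only a brief sketch: it refers to~\cite[Prop.~3.2]{exner-post:09} for the verification of \eqref{eq:j1}--\eqref{eq:j.inv} with the given identification operators, and singles out \Lem{qf.ed} as the one new ingredient needed for~\eqref{eq:j.comm.1} on the inner edge neighbourhoods (because of the mid-edge potential and the magnetic term). Your write-up spells out precisely these steps --- the transverse spectral-gap bound for $JJ^*$, the $\Err(\eps/d)$ vertex estimate via the trace-type inequality, the use of \Lem{qf.ed} on inner edges, and the exact cancellation between the manifold vertex potential $w_v\eps^{m/2}\conj{f(v)}\avint_v u$ and the summed graph $\delta$-boundary terms --- so there is no substantive difference in strategy.
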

\begin{proof}
  The argument is similar as in the (simpler) situation
    of~\cite[Prop.~3.2]{exner-post:09} using the identification
    operators just defined.  A new feature here is that we employ additionally \Lem{qf.ed}
    to compare the form contribution on the internal edge
    neighbourhood and its counterpart on the metric graph.  Note also that
    in the present situation we have additionally the magnetic potential
    and slightly different constants than
    in~\cite{exner-post:09}.
\end{proof}

In particular, by a clever choice of the $\eps$-dependency of the
parameter $d$, we are able to make the following conclusion:
\begin{corollary}
  \label{cor:q-u-e}
  Assume that $w_e$, $w_v$ and $A_e$ are chosen as
  in~\eqref{eq:def.w.a}, then $\maxsymb w=\Err(d^{-2})$ and $\maxsymb
  A=\Err(d^{-1})$.  If in addition, $d=\eps^\alpha$ with $0<\alpha <
  1/5$, then $\qf h_\eps$ and $\qf h$ are $\delta_\eps$-quasi
  unitarily equivalent for all $0 < \eps \le \eps_1$, where
  $\delta_\eps=\Err(\eps^{(1-5\alpha)/2})$, and where $\eps_1>0$ is a
  constant.

  Finally, if $\,0<\alpha < 1/13$, then the associated operators
  $H_\eps$ and $H$ are $\wt \delta_\eps$-quasi unitarily equivalent
  with $\wt \delta_\eps = \Err(\eps^{(1-13\alpha)/2})$.
\end{corollary}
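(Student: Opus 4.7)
The plan is to simply combine the three main ingredients already in place: the explicit asymptotics of the approximation parameters from~\eqref{eq:def.w.a}, the form-level quasi-unitary equivalence from \Prp{q-u-e}, and the abstract transfer to operators from \Thm{op.q-u-e}.

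First, I would read off the orders of $\maxsymb w=\max_{e,v}\{\abs{w_e},\abs{w_v}\}$ and $\maxsymb A=\max_e \abs{A_e}$ as $d\to 0$ directly from the formulas \eqref{a.m.n-m}--\eqref{v.m} (cf.\ \Rem{order.a.w}). The magnetic potentials are at worst of order $d^{-1}$, while the $\delta$-strengths on the inner edges are at worst of order $d^{-2}$ (only when the exceptional case $\sum_l T_{jl}\conj{T_{kl}}=0$ with $k\in N_j\cap \hat m$ occurs; otherwise $d^{-1}$ suffices). The endpoint strengths $w_j(d)$ from \eqref{v.n-m}--\eqref{v.m} are of order $d^{-1}$. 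Hence $\maxsymb A=\Err(d^{-1})$ and $\maxsymb w=\Err(d^{-2})$.

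For the first assertion, I would plug $d=\eps^\alpha$ into the estimate of \Prp{q-u-e}. The two error terms become
\begin{equation*}
 \Bigl(\frac\eps d\Bigr)^{1/2}(\maxsymb w +1)
 = \Err\bigl(\eps^{(1-\alpha)/2}\cdot \eps^{-2\alpha}\bigr)
 = \Err(\eps^{(1-5\alpha)/2})
\end{equation*}
and $\eps^{1/2}/d=\Err(\eps^{(1-2\alpha)/2})$, which is subdominant since $(1-2\alpha)/2\ge(1-5\alpha)/2$ whenever $\alpha\ge 0$. Thus for $\alpha<1/5$ the dominant term is $\Err(\eps^{(1-5\alpha)/2})$ and $\delta_\eps\to 0$; choosing $\eps_1$ small enough to accommodate the upper bound $\eps\le\eps_0$ from \Prp{h.rel.bdd.mfd} gives the first claim.

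For the operator-level assertion, I would invoke \Thm{op.q-u-e}\eqref{op.q-u-e.i}. Its hypothesis $\normsqr[\HS^1] f\le 2(\qf h(f)+C\normsqr f)$ and its analogue on the manifold side are precisely the contents of \Prp{h.rel.bdd}\eqref{h.rel.bdd.ii} and \Prp{h.rel.bdd.mfd}\eqref{h.rel.bdd.mfd.ii}. The relevant constant is $C=\max\{C_{1/2},\wt C_{1/2}\}$, and by \eqref{eq:c.eta.dep.mfd} with the above orders we obtain
\begin{equation*}
 C=\Err(\maxsymb A{}^2)+\Err(\maxsymb w{}^2)+\Err(\maxsymb w/d)
 = \Err(d^{-2})+\Err(d^{-4})+\Err(d^{-3})
 = \Err(\eps^{-4\alpha}).
\end{equation*}
\Thm{op.q-u-e}\eqref{op.q-u-e.i} then upgrades the form equivalence constant $\delta_\eps$ to an operator equivalence constant $12 C\delta_\eps$, and
\begin{equation*}
 C\,\delta_\eps = \Err(\eps^{-4\alpha})\cdot \Err(\eps^{(1-5\alpha)/2})
 = \Err(\eps^{(1-13\alpha)/2}),
\end{equation*}
which tends to zero precisely when $\alpha<1/13$. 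The only really delicate point is bookkeeping the powers of $d$ correctly; the loss of a factor $C$ (rather than $\sqrt C$) in the passage from forms to operators, explained in \Rem{why.c}, is what reduces the admissible range from $\alpha<1/5$ to $\alpha<1/13$, and it is this factor that dictates the final exponent in $\wt\delta_\eps$.
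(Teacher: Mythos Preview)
Your proposal is correct and follows essentially the same approach as the paper: read off the orders of $\maxsymb w$ and $\maxsymb A$ from~\eqref{eq:def.w.a}, substitute $d=\eps^\alpha$ into \Prp{q-u-e} to get $\delta_\eps$, and then apply \Thmenum{op.q-u-e}{op.q-u-e.i} with the constant $C=\max\{C_{1/2},\wt C_{1/2}\}=\Err(\eps^{-4\alpha})$ coming from \Prps{h.rel.bdd}{h.rel.bdd.mfd}. The only place where the paper is slightly more explicit is in the choice of $\eps_1$: since $\eps_0$ from \Prp{h.rel.bdd.mfd} itself depends on $\eps$ via $\maxsymb w$, one has $\eps_0=\Err(\eps^{2\alpha})$, and the condition $\eps\le\eps_0$ becomes $\eps^{1-2\alpha}\le c$, which is satisfiable precisely because $2\alpha<1$; your phrase ``choosing $\eps_1$ small enough'' is correct but glosses over this dependence.
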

\begin{proof}
  The quasi-unitary equivalence of the quadratic forms follows from
  \Prp{q-u-e}, as well as the estimate on $\delta_\eps$.  Moreover,
  $\eps_0=\eps_0(\eps)$ as given in \Prp{h.rel.bdd.mfd} is generally
  of order $\Err(1/\maxsymb w)=\Err(\eps^{2\alpha})$, i.e.\ $\eps_0
  \le c \eps^{2\alpha}$.  In particular, we can choose, $\eps_1 =
  c^{1/(1-2\alpha)}$.

  For the last assertion, note that the constants $C_{1/2}$ and $\wt
  C_{1/2}$ of \Prps{h.rel.bdd}{h.rel.bdd.mfd} fulfil $C_{1/2} =
  \Err(\eps^{-4\alpha})$ and $\wt C_{1/2}=\Err(\eps^{-4\alpha})$,
  cf.~\eqref{eq:c.eta.dep} and~\eqref{eq:c.eta.dep.mfd}, since the
  term $\maxsymb w^2=\Err(\eps^{-4})$ is dominant.  The result
  now follows from \Thmenum{op.q-u-e}{op.q-u-e.i} with
  $C:=\max\{C_{1/2},\wt C_{1/2}\}$, and therefore we have $\wt
  \delta_\eps = 12 C \delta_\eps = \Err(\eps^{-4\alpha + (1-5
    \alpha)/2})=\Err(\eps^{(1-13\alpha)/2})$.
\end{proof}

Finally we are in position to put the two convergence steps together;
  and to state and prove the main result of this article:
\begin{theorem}
  \label{thm:main}
  Assume that $\Gamma(0)$ is a star graph with vertex condition
  parametrised by matrices $S$ and $T$ as in \Sec{graphapprox} and let
  $0 < \alpha < 1/13$.  Then there is a Schr\"odinger operator $H_\eps$
  on an approximating manifold $X_\eps$ as constructed in
  \Sec{spaces.mfd} such that
  \begin{equation*}
    \norm{J \starsymb R_d(z) J^* - R_\eps(z)}
    =\Err(\eps^{\min\{1-13\alpha,\alpha\}/2})
  \end{equation*}
  for $z \in \C \setminus \R$, where $R_\eps(z)=(H_\eps-z)^{-1}$.
\end{theorem}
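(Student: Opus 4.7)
The plan is to combine the two previously established convergence results via the triangle inequality, with the approximating graph operator $\approxsymb H_d$ on $\Gamma^{S,T}(d)$ playing the role of the intermediate bridge. This object approximates the target operator $\starsymb H$ on the graph side by \Thm{ce-ex-tu}, and is approximated by the manifold Schr\"odinger operator $H_\eps$ on $X_\eps$ by \Cor{q-u-e}. The coupling between the two scaling parameters is given by setting $d := \eps^\alpha$ with $0 < \alpha < 1/13$, as prescribed by \Cor{q-u-e}.

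First I would apply \Cor{q-u-e} with $d = \eps^\alpha$. This yields the $\wt\delta_\eps$-quasi-unitary equivalence of $H_\eps$ and $H = \approxsymb H_d$ with $\wt\delta_\eps = \Err(\eps^{(1-13\alpha)/2})$. Invoking \Thmenum{op.q-u-e}{op.q-u-e.ii} (specifically the estimate~\eqref{eq:q-u-e.iib}) converts this to the resolvent bound
\begin{equation*}
  \bignorm{J \approxsymb R_d(z) J^* - R_\eps(z)} = \Err(\eps^{(1-13\alpha)/2})
\end{equation*}
for $z \in \C \setminus \R$, with implicit constant depending on $z$.

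Next I would appeal to the quantitative Hilbert-Schmidt estimate $\norm[\mc B_2]{\starsymb R_d(z) - \approxsymb R_d(z)} = \Err(\sqrt{d}) = \Err(\eps^{\alpha/2})$ stated just before \Thm{ce-ex-tu}, and taken from~\cite{cet:10}. Since the operator norm is dominated by the Hilbert-Schmidt norm, and since $J$ has operator norm $\norm{J} \le 1$ (on each external or internal edge $J_e f_e = f_e \otimes \1_\edeps$ is an isometry because $\vol Y_e = 1$ and $\1_\edeps = \eps^{-m/2}$ is $\Lsqrspace$-normalised on $\eps Y_e$, while $J$ vanishes on vertex neighbourhoods), we have $\norm{J^*} \le 1$ as well and hence
\begin{equation*}
  \bignorm{J(\starsymb R_d(z) - \approxsymb R_d(z))J^*}
  \le \bignorm{\starsymb R_d(z) - \approxsymb R_d(z)}
  = \Err(\eps^{\alpha/2}).
\end{equation*}
The triangle inequality then yields the claimed bound $\Err(\eps^{\min\{1-13\alpha,\alpha\}/2})$.

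There is essentially no analytical obstacle at this final stage: one only has to verify that the three operators $J \starsymb R_d(z) J^*$, $J \approxsymb R_d(z) J^*$, and $R_\eps(z)$ all act on the same space $\Lsqr{X_\eps}$, which follows from the identification $\starsymb R_d(z) := \starsymb R(z) \oplus 0$ on $\Lsqr{\Gamma^{S,T}(d)}$ made in~\eqref{Gdecomp} and from the construction of $J$. All the genuinely difficult work — the algebraic construction of the approximating family on the graph from~\cite{cet:10}, the derivation of the quasi-unitary equivalence between graph and manifold forms in \Prp{q-u-e}, and the careful bookkeeping of the $d$-dependence of the magnetic and electric potentials in \Rem{order.a.w} — has already been carried out in the preceding sections; the theorem itself amounts to chaining the two estimates and balancing the two exponents through the choice $d = \eps^\alpha$.
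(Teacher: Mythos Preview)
Your proof is correct and follows exactly the paper's approach: the paper's own proof is the one-liner ``The result is an immediate consequence of \Cor{q-u-e}, \Thmenum{op.q-u-e}{op.q-u-e.ii} and \Thm{ce-ex-tu},'' and you have simply spelled out the triangle-inequality chaining, the bound $\norm{J}\le 1$, and the substitution $d=\eps^\alpha$ that this sentence encodes.
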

\begin{proof}
  The result is an immediate consequence of \Cor{q-u-e},
  \Thmenum{op.q-u-e}{op.q-u-e.ii} and \Thm{ce-ex-tu}.
\end{proof}

\begin{remark}
  \label{rem:err.order}
  The error term in the theorem depends only on $z$ and the building
  block manifolds $X_v$ at the vertices and the transversal manifolds
  $Y_e$ on the edges.  If $\alpha=1/14$, we obtain the error estimate
  $\Err(\eps^{1/28})$ which is the maximal value the function $\alpha
  \mapsto \min\{1-13\alpha, \alpha\}/2$ can achieve.

  The error estimate we obtain here is of the same type that we
  obtained in~\cite[Sec.~4]{exner-post:09} when we approximated the
  $\delta'\mathrm{s}$ interaction despite the fact that the present
  approximation of this particular coupling is different, cf.\
  \Sec{ex.vx} below.

  If the condition~\eqref{eq:t.term} mentioned in \Rem{order.a.w} is
  fulfilled for all $j,k$ we obtain a slightly better estimate.  In
  this case, we have $\maxsymb w=\Err(d^{-1})$ instead of
  $\Err(d^{-2})$, and $\qf h_\eps$ and $\qf h$ are
  $\delta_\eps$-quasi, where $\delta_\eps=\Err(\eps^{(1-3\alpha)/2})$.
  Moreover, the associated operators $H_\eps$ and $H$ are $\wt
  \delta_\eps$-quasi unitarily equivalent with $\wt \delta_\eps =
  \Err(\eps^{(1-7\alpha)/2})$. However, both assumptions made about
  $\alpha$, namely $0 < \alpha < 1/13$ and $0 < \alpha < 1/7$, are for
  sure not optimal.
\end{remark}

There is an obvious extension to the above convergence result for
quantum graphs $\Gamma_0$ with more than one vertex.  For quantum
graphs with finitely many vertices, the convergence result holds
without changes, and for infinitely many vertices, some uniformity
conditions are needed.  Such questions are discussed in detail
in~\cite{post:06} and~\cite{post:12}.

\begin{remark}
  \label{rem:q-u-e}
  One may ask whether one can reformulate the ``quasi-unitary
  equivalence'' for the present situation using
  \begin{equation*}
    \map{\wt J}{\Lsqr {\Gamma(0)}}
    {\Lsqr{\Gamma^{S,T}(d)}
      =\Lsqr {\Gamma(0)} \oplus \Lsqr{\Gamma^{S,T}_\inl(d)}},
    \qquad
    \wt J f = f \oplus 0,
  \end{equation*}
  in which case $\starsymb R_d(z)=\wt J \starsymb R(z) \wt J^*$
  by~\eqref{Gdecomp} and the resolvent convergence of \Thm{ce-ex-tu}
  can be stated as
  \begin{equation}
    \label{eq:q-u-e.2}
    \norm[\Lin{\Lsqr{\Gamma^{S,T}(d)}}]
    {\wt J \starsymb R(z)\wt J^* - \approxsymb R_d(z)}
    = \Err(d^{1/2})
  \end{equation}
  for $d \to 0$. In fact, we are interested primarily in spectral
  consequences of such a reformulation which can be demonstrated in a
  more direct way.  To this end, note that eq.~\eqref{eq:q-u-e.2} is
  just~\eqref{eq:q-u-e.iib} of \Thm{op.q-u-e} \emph{without} the
  constant $C$.  Moreover, from~\cite[Thm.~4.2.9--10]{post:12} one can
  conclude that~\eqref{eq:q-u-e.iia} is valid for more general $\phi$
  than $\phi(\lambda)=(\lambda-z)^{-1}$, see
  \Thmenum{op.q-u-e}{op.q-u-e.ii}.  Using arguments analogous to those
  in \cite[Sec.~4.2--4.3]{post:12}, we can deduce
  from~\eqref{eq:q-u-e.2} that~\eqref{eq:q-u-e.iib} also holds for
  such $\phi$.  Consequently, the spectral convergence stated in
  \Thmenums{op.q-u-e}{op.q-u-e.iii}{op.q-u-e.iv} also holds in this
  situation.
\end{remark}

%
\section{Examples}
\label{sec:ex}
%

\subsection{Embedded graphs and graph neighbourhoods}
\label{sec:emb.gr}
Consider the situation when the graph is embedded in $\R^\nu$, $\nu
\ge 2$.  This may be a restriction to the vertex coupling if $\nu=2$
and the vertex degree exceeds three; recall that the edges of the
internal graph defined in~\eqref{conv.iii} of \Sec{intro} are supposed
to be non-intersecting. For $\nu \ge 3$ this difficulty can be avoided
in the edges are properly curved. At the same time, irrespective of
$d$ the lengths of the edge parts of the manifold change as $\eps\to
0$ by an amount given by the size of the vertex neighbourhoods. Let us
point out briefly that for such embedded ``fat graphs'' curved and
shortened edges lead to a small error in the approximation only.

Consider first the length change. In our case, the difference of the
original edge length and the one of an embedded edge is of order
$d-\eps=d(1-\eps/d)=d(1-\eps^{1-\alpha})$.  We have shown in
\cite[Lem.~2.7]{exner-post:09} that this leads to an additional error
of order $\Err(\eps^{1-\alpha})$; expressed again in terms of
quasi-unitary operators.

Furthermore, if we allow \emph{curved} edges in the case of a graph
embedded in $\R^\nu$, we still arrive at the same limit operator.  The
error is of order $\Err(\eps^{1-\alpha})$ (see~\cite[Sec.~6.7 and
Prop.~4.5.6]{post:12} for details; the factor $\eps$ comes from the
shrinking rate, the factor $\eps^{-\alpha}$ from the curvature term of
the embedded curve in dimension $\nu=2$; the length shrinks by
$d=\eps^\alpha$ so its curvature is of order $\eps^{-\alpha}$).
Similar arguments apply for $\nu \ge 3$.  In particular, combining the
effect of shortening of edges and curved edges, and using the
transitivity of quasi-unitary equivalence
(\cite[Prop.~4.2.8]{post:12}) we arrive at an error estimate which is
not worse than the one in \Thm{main}.

\subsection{Special vertex couplings and approximation by
  Schr\"odinger operators}
\label{sec:ex.vx}

While the approximation described in \Thm{main} cover any self-adjoint
coupling, for some of them we have better alternatives. This concerns,
in particular, the $\delta$ coupling where a simple scaled potential
does a better job as explained in \cite{exner-post:09}. On the other
hand, for couplings with functions discontinuous at the vertex we do
not have many alternatives.

It is illustrative to compare the approximation of the
$\delta'_{\mathrm s}$ coupling obtained from the graph-level
approximations described in \Sec{graphapprox} with the one
from~\cite{cheon-exner:04} used in~\cite{exner-post:09} for the
approximation by Schr\"odinger operators.  Recall that a
$\delta'_{\mathrm s}$ coupling of strength $\beta$ in a vertex of
degree $n$ edges characterised by the condition
\begin{equation*}
  \frac 1 \beta J f(0) - f'(0)=0,
\end{equation*}
where $J$ is the $n \times n$ matrix with all entries one. In other
words, the respective $ST$-parametrisation from \Prp{stform} is given
by $m=n$, $S= \beta^{-1} J$ and $T=0$, and the strengths of the
$\delta$ potentials required to approximate $\delta'_{\mathrm s}$
according to \Thm{ce-ex-tu} are
\begin{equation*}
  w_{\{jk\}} = - \frac \beta {d^2} - \frac 2 d
  \quadtext{and}
  w_j =  \frac {2-n} \beta - \frac {n-1} d.
\end{equation*}
In particular, all inner edges are present.  If $n=3$, for instance,
we employ a small triangle graph of length scale $d=\eps^\alpha$
attaching the ``external'' edges to its vertices (as sketched in
\Fig{gen-approx}).  The corresponding Schr\"odinger operator has a
potential of order $-\eps^{-\alpha-1}$ near $v_j$ and of order $-\beta
\eps^{-2\alpha-1}$ at the midpoint of each edge $\{jk\}$; for
simplicity the potentials can be chosen piecewise constant.

The approximation used in~\cite{exner-post:09} is different. Here we
keep the original star graph, but introduce additional
$\delta$-couplings on each edge at distance $d=\eps^\alpha$ of the
central vertex.  The strength of the coupling at the central vertex is
$-\beta/d^2$, hence the Schr\"odinger potential there is of order
$-\beta \eps^{-2\alpha-1}$.  The strength of the coupling at the
additional vertices is $-1/d$, hence the Schr\"odinger potential is of
order $-\beta \eps^{-\alpha-1}$.  One sees that the approximation
graph topology is different but the $\delta$ strengths in the two
cases differ only in lower order terms\footnote{Such differences are
  not unusual, recall the approximations of $\delta'$ on the line in
  \cite{cheon-shigehara:98, exner-neidhardt-zagrebnov:01}; they do not
  matter as long as both choices lead to cancellation of the singular
  terms in the resolvent difference.} with respect to the length scale
$d=\eps^\alpha$.

Let us finally remark that $\delta'_\mathrm{s}$ is not the only
example of interest; our method makes it possible to approximate other
couplings of potential importance such as the scale-invariant ones
analysed recently in~\cite{cet:11}.

\subsection*{Acknowledgement}

O.P.\ enjoyed the hospitality in the Doppler Institute where a part of
the work was done. The research was supported by the Czech Science
Foundation and Ministry of Education, Youth and Sports within the
projects P203/11/0701 and LC06002.


\newcommand{\etalchar}[1]{$^{#1}$}
\providecommand{\bysame}{\leavevmode\hbox to3em{\hrulefill}\thinspace}

\end{document}